\newtheorem{theorem}{Theorem}[section]
\theoremstyle{definition}
\newtheorem{definition}{Definition}
\newtheorem{corollary}{Corollary}
\newtheorem{example}{Example}
\theoremstyle{remark}
\newtheorem{remark}[theorem]{Remark}
\numberwithin{equation}{section}
\begin{document}
	\title{Classification of LCD and self-dual codes over a finite non-unital local ring}

	\author{Anup Kushwaha}
    \address{Department of Mathematics, Indian Institute of Technology Patna, Patna-801106}
    \curraddr{}
    \email{ anup$\textunderscore$2221ma11@iitp.ac.in}
    \thanks{}

   \author{Indibar Debnath}
    \address{Department of Mathematics, Indian Institute of Technology Patna, Patna-801106}
    \curraddr{}
    \email{ indibar\_1921ma07@iitp.ac.in}
    \thanks{}

   \author{Om Prakash}
    \address{Department of Mathematics, Indian Institute of Technology Patna, Patna-801106}
    \curraddr{}
    \email{om@iitp.ac.in}
    \thanks{}

   \author{ Patrick Sol\'e$^*$}
   \address{I2M, (CNRS, Aix-Marseille University, Centrale Marseille) Marseille, France.}
   \curraddr{}
   \email{sole@enst.fr}
   \thanks{* Corresponding author}
	
	\subjclass{94B05, 16L30}
	
	\keywords{MDS codes; AMDS codes; LCD codes; Self-dual codes; Non-unital rings.}
	
	\dedicatory{}

\begin{abstract}
This work explores LCD and self-dual codes over a noncommutative non-unital ring $ E_p= \langle r,s ~|~ pr =ps=0,~ r^2=r,~ s^2=s,~ rs=r,~ sr=s \rangle$ of order $p^2$ where $p$ is a prime. Initially, we study the monomial equivalence of two free $E_p$-linear codes. In addition, a necessary and sufficient condition is derived for a free $E_p$-linear code to be MDS and almost MDS (shortly AMDS). Then, we use these results to classify MDS and AMDS LCD codes over $E_2$ and $E_3$ under monomial equivalence for lengths up to $6$. Subsequently, we study left self-dual codes over the ring $E_p$ and classify MDS and AMDS left self-dual codes over $E_2$ and $E_3$ for lengths up to $12$. Finally, we study self-dual codes over the ring $E_p$ and classify MDS and AMDS self-dual codes over $E_2$ and $E_3$ for short lengths.
\end{abstract}

\maketitle

\section{Introduction}
 In traditional studies of coding theory, finite fields have been used as code alphabets, and codes have a structure of vector spaces over these fields. Over the past three decades, considerable research has extended to codes as modules over various rings, primarily concentrating on unital rings mostly commutative. However, recent attention has shifted significantly towards non-unital rings  \cite{Alah22,Alah23,Kim2022a,Kim2022b,Shi21}. In this paper, as an alphabet, we consider a noncommutative non-unital ring $E_p=\langle r,s ~|~ pr =ps=0,~ r^2=r,~ s^2=s,~ rs=r,~ sr=s \rangle$ of order $p^2$ present in the notation of Fine \cite{Fine93}, where $p$ is a prime number.\par
To provide some context, first recall some fundamental concepts. A code $C$ of length $n$ and size $M$ is called an ($n, M, d$)-code if its minimum (Hamming) distance is $d$. For an ($n, M, d$)-code $C$ over an alphabet of size $q$, the Singleton bound provides an upper bound for the code size. It states that $M\leq q^{n-d+1}$. The linear codes that achieve this bound are considered optimal linear codes and known as maximum distance separable (MDS) codes. In other words, if $M=q^{n-d+1}$, the corresponding code is called MDS. Further, if $M=q^{n-d}$, the corresponding code is known as an almost MDS (AMDS) code. The distance of a linear code quantifies its ability to detect and correct errors. Thus, MDS and AMDS codes hold significant importance in theoretical and practical applications, as their error-detecting and error-correcting capacities are maximum. These facts show that MDS and AMDS codes are worth studying. \par
The dual of a linear code $C$ over a finite field $\mathbb{F}_q$ is the collection of all orthogonal vectors to $C$ under Euclidean inner product, and denoted by $C^\perp$. If a linear code $C$ meets its dual trivially, i.e., $C \cap C^\perp= \{\boldsymbol{0}\}$, it is known as a linear complementary dual (LCD) code. The class of LCD codes over finite fields was first introduced by  Massey in 1992 in \cite{Massey92}, as a solution to an Information theory problem. Then they were studied by Sendrier in connection with code automorphism algorithms \cite{Send04}. In 2016, Carlet and Guilley \cite{Carl16}  explored their applications in Boolean masking against Fault-Injection and Side-Channel attacks on embarked hardware. Further, LCD codes are also crucial in secret-sharing schemes \cite{Yadav}. These applications of LCD codes have reignited interest in constructing LCD codes with large distances. Therefore, the construction of MDS and AMDS LCD codes holds significant theoretical and practical importance.\par

  In 2015, Liu and Liu \cite{Liu15} provided certain conditions under which a linear code over finite chain rings becomes LCD. In 2019, Liu and Wang \cite{Zih19} studied LCD codes over finite commutative rings and presented some results on the equivalence of free LCD codes. Meantime, Shi et al. \cite{Shi19} demonstrated that double circulant LCD codes exist over the ring $\mathbb{Z}_4$, and then enumerated such codes for length $2n$ over $\mathbb{Z}_4$. On the other hand, Prakash et al. \cite{ch6-Prakash22c} enumerated LCD double circulant codes of length $2n$ over a finite non-chain commutative ring. Moreover, Islam and Prakash \cite{Islam2022} utilized cyclic codes to construct LCD codes over a finite non-chain ring. Although LCD codes have been explored over different commutative rings, only a few works are available in the literature on LCD codes over noncommutative rings. This inspires us to explore LCD codes over a noncommutative ring. \par

Let $C$ and $C^{\prime}$ be two $\mathbb{F}_q$-linear codes of length $n$. Then, they are called monomial equivalent if there exists a monomial matrix $M_{n\times n }$ over $\mathbb{F}_q$ such that $C=C^{\prime}M=\{ \boldsymbol{x}M \mid \boldsymbol{x} \in C^{\prime} \}$. (Recall that a matrix is {\em monomial} if it contains exactly one nonzero entry per row and per column). In 2019, Araya and Harada \cite{Araya19} provided a comprehensive classification of binary and ternary LCD codes under monomial equivalence for lengths up to $13$ and $10$, respectively. Recently, in 2021, Shi et al. \cite{Shi21} introduced the LCD codes over the non-unital rings and investigated left LCD and left ACD codes over the ring $E_2$. We extend some of their results and initiate the study of MDS and AMDS LCD codes over $E_p$. Firstly, using the classification of binary and ternary LCD codes by Araya and Harada \cite{Araya19}, we have enumerated monomial inequivalent LCD codes over $E_2$ for lengths up to $13$ and over $E_3$ for lengths up to $10$. Then, inequivalent LCD codes over $E_2$ and $E_3$ with a fixed minimum distance for the said lengths are enumerated. Subsequently, we have associated MDS and AMDS LCD codes over $\mathbb{F}_p$ with MDS and AMDS codes over $E_p$ and classified MDS and AMDS LCD codes over $E_2$ and $E_3$ for lengths up to $6$. \par
On the other hand, if $C=C^\perp$, the linear code $C$ over $\mathbb{F}_q$ is known as a self-dual code. Since the 1970's research on self-dual codes has bloomed, largely driven by their many connections with lattices, designs, and, more recently, quantum error-correcting codes. In 2022,  Alahmadi et al. \cite{Alah22} studied quasi-self-dual codes over the noncommutative non-unital ring $E_2$ and classified them for short lengths. Then, in a follow-up paper, they considered three non-unital rings and studied various dualities of codes over these rings \cite{Alah23}. Moreover, they studied self-dual codes over the ring $E_p$ and classified them in short lengths \cite{Alah24}. Further, Kushwaha et al. \cite{Kushwaha24} studied quasi-self-dual codes over a noncommutative non-unital ring containing $9$ elements. Motivated by these works, we study two one-sided MDS and AMDS self-dual codes, such as MDS and AMDS left and right self-dual codes over $E_p$. Firstly, we study left self-dual codes over $E_p$ and classify MDS and AMDS left self-dual codes over $E_2$ and $E_3$ for lengths up to $12$. Then, we study right self-dual codes over $E_p$ and prove the non-existence of MDS right self-dual codes. Further, we show that AMDS right self-dual codes over $E_p$ exist only for length $2$. Finally, we study MDS and AMDS self-dual codes over $E_p$ and classify these codes over $E_2$ and $E_3$ for short lengths. As per our survey, this is the first attempt to study MDS and AMDS LCD and self-dual codes over a non-unital ring. \par
This work is structured as follows. Section $2$ contains basic definitions required for the subsequent sections. Section $3$ studies LCD codes over the ring $E_p$ and classifies MDS and AMDS LCD codes over $E_2$ and $E_3$ for some specific lengths. In Section $4$,  MDS and AMDS left self-dual codes are classified for lengths up to $12$. Moreover, we classify MDS and AMDS self-dual codes over $E_2$ and $E_3$ for short lengths. Section $5$ concludes the paper.

\section{Codes over the ring $E_p$}
For any prime $p$, let $E_p$ be a ring generated by two generators $r$ and $s$ under certain relations as follows:
$$ E_p= \langle r,s \mid pr =ps=0,~ r^2=r,~ s^2=s,~ rs=r,~ sr=s \rangle.$$
For an example, consider the ring $E_p$ generated by two matrices $r$ and $s$ over $\mathbb{F}_p$ where

$$
r=\begin{pmatrix}
0 ~& 0\\0~ & 1\\
\end{pmatrix},~~~
s=\begin{pmatrix}
0 ~& 1\\0~ & 1\\
\end{pmatrix}
.$$
Thus, the ring $ E_p=\{ ~ir+js~ |~ 0 \leq i,j <p ~\}$ contains $p^2$ elements and its characteristic is $p.$ We observe that the ring $E_p$ is noncommutative and has no multiplicative identity element. Also, $er=es=e$ for all $ e \in E_p.$ Let $t=r+(p-1)s$. Then, every element $e \in E_p$ has a $t$-adic decomposition as follows:
 $$e=ur+vt ~~ \text{where}~ u,v \in \mathbb{F}_p.$$
 Further,  $I=\{mt~ |~ 0 \leq m <p\}$ is the unique maximal ideal of the ring $E_p$.
%
Therefore, the ring $E_p$ is a local ring with residue class field $E_p/I\cong \mathbb{F}_p=\{0,1,\ldots, p-1\}$. Next, we can define a natural action of $\mathbb{F}_p$ on $E_p$ as $eu=ue$ for all $e \in E_p$ and $u \in \mathbb{F}_p$. Note that this action is distributive, i.e., for all $ e \in E_p$ and $u,v \in \mathbb{F}_p$, we have $e(u\oplus_{\mathbb{F}_p} v)=eu+ev=ue+ve$ where $\oplus_{\mathbb{F}_p}$ denotes the addition in $\mathbb{F}_p.$ Next, we write an arbitrary element of $E_p$ in $t$-adic decomposition form and define a  map $\alpha:E_p\rightarrow E_p/I =\mathbb{F}_p$ by
 $$\alpha(e)=\alpha(ur+vt)=u.$$
 This map is known as the map of reduction modulo $I$ and can be extended naturally from $E_p^n$ to  $\mathbb{F}_p^n$.

\begin{definition}[Linear code]\label{def1}
    A left $E_p$-submodule of $E_p^n$ refers to an $E_p$-linear code of length $n$.
\end{definition}

\begin{definition}[Generating set] Let $C$ be an $E_p$-linear code of length $n$  and  $X=\{ \boldsymbol{ x}_1,\boldsymbol{x}_2,\ldots , \boldsymbol{x}_k \} \subset C$. Then, the (left) $E_p$-span of $X$ is given by the set $$\langle X \rangle_{E_p} = \{ e_1\boldsymbol{x}_1+e_2\boldsymbol{x}_2+ \cdots + e_k\boldsymbol{x}_k ~|~ e_i \in E_p, ~  1 \leq i \leq k \},$$
 and the additive span of $X$ is given by the set
$$\langle X \rangle_{\mathbb{F}_p} = \{ u_1\boldsymbol{x}_1+u_2\boldsymbol{x}_2+ \cdots + u_k\boldsymbol{x}_k ~|~ u_i \in \mathbb{F}_p, ~ 1 \leq i \leq k  \}.$$
Note that $\langle X \rangle_{E_p}$ does not always contain $\langle X \rangle_{\mathbb{F}_p}$ since the ring $E_p$ does not have the unity element. A subset $X=\{ \boldsymbol{x}_1,\boldsymbol{x}_2,\ldots , \boldsymbol{x}_k \}$ of $C$ is called a generating set for the code $C$ if
    $$\langle X \rangle_{E_p} \cup \langle X \rangle_{\mathbb{F}_p}=C.$$

\end{definition}

\begin{definition}[Generator matrix] Let $C$ be an $E_p$-linear code  of length $n$  and $X=\{\boldsymbol{x}_1,\boldsymbol{x}_2,\ldots, \boldsymbol{x}_k \}\subset C$ be its generating set. Then, a generator matrix $G_{E_p}$ of $C$ is a $k \times n$ matrix  whose rows are $\boldsymbol{x}_1,\boldsymbol{x}_2,\ldots, \boldsymbol{x}_k$ and $\langle G \rangle_{E_p}=\langle X \rangle_{E_p} \cup \langle X \rangle_{\mathbb{F}_p}$.

\end{definition}

\begin{definition}[Minimum (Hamming) weight and distance of a linear code]
The number of non-zero positions in a codeword of a linear code $C$ is called the weight of that codeword, and the minimum of all the non-zero weights is defined as the minimum weight of the code $C$. Next, the number of positions at which two distinct codewords of a linear code $C$ differ is called the distance between them. Then, the minimum of all the distances between two distinct codewords of $C$ is called the minimum distance of the code $C$. Usually, $wt(C)$ and $d(C)$ denote the minimum weight and the distance of a linear code $C$, respectively.

\end{definition}


 Now, we define two linear codes over $\mathbb{F}_p$ associated with a linear code $C$ of length $n$ over $E_p$.
\begin{enumerate}
\item[(i)] \textbf{Residue code:} The residue code of the code $C$ is defined by
    $$ Res(C)=\{\alpha(\boldsymbol{x})~ |~ \boldsymbol{x} \in C \}.$$
    \item[(ii)] \textbf{Torsion code:} The torsion code of the code $C$ is given by
    $$ Tor(C)=\{ \boldsymbol{v} \in \mathbb{F}_{p}^{n} ~ |~ \boldsymbol{v}t \in C\},$$ \\
    where $t=r+(p-1)s.$
    \end{enumerate}
%
    \begin{remark}
        Throughout the paper, $r$ and $s$ represent the generators of the ring $E_p$ satisfying the relations $pr=ps=0,~r^2=r,~ s^2=s,~ rs=r,~ sr=s $, and $t=r+(p-1)s$. In addition, $m_1$ and $m_1+m_2$ denote the dimensions of $Res(C)$ and $Tor(C)$, respectively.
     \end{remark}

    Next, for any $\boldsymbol{x}=(x_1, x_2, \ldots, x_n),~\boldsymbol{y}=(y_1,y_2,\ldots,y_n) \in E_p^n$, define an inner product on $E_p^n$ by
    $$ \langle \boldsymbol{x}, \boldsymbol{y}  \rangle=\sum_{j=1}^{n}x_j y_j.$$
    Under this inner product, the following two duals of an $E_p$-linear code $C$ of length $n$ are defined.
    \begin{enumerate}
     \item[(i)] \textbf{Left dual:} The left dual of the code $C$ is defined as
    $$C^{\perp_L}= \{ \boldsymbol{z} \in E_p^n~ |~\langle \boldsymbol{z}, \boldsymbol{w}  \rangle= 0,\forall ~\boldsymbol{w}\in C\}.$$

    \item[(ii)] \textbf{Right dual:} The right dual of the code $C$ is  defined as
        $$C^{\perp_R}= \{ \boldsymbol{z }\in E_p^n~ |~\langle \boldsymbol{w}, \boldsymbol{z}  \rangle= 0,\forall ~\boldsymbol{w} \in C\}.$$
    \end{enumerate}

\section{LCD codes over the ring $E_p$}
This Section focuses on LCD codes over the ring $E_p$ and enumerates monomial inequivalent LCD codes over $E_2$ and $E_3$ for lengths up to $13$ and $10$, respectively. Also, the number of monomial inequivalent LCD codes with a given code length and minimum distance is calculated over $E_2$ and $E_3$. Moreover, we study MDS and AMDS LCD  codes over $E_p$ and classify these codes over $E_2$ and $E_3$ under monomial equivalence for lengths up to $6$.

\begin{definition}[Left and right nice codes]
   An $E_p$-linear code $C$ of length $n$ is said to be left nice if it satisfies  $|C|\cdot|C^{\perp_L}|=|E_p|^n$, and is called right nice if it satisfies  $|C|\cdot|C^{\perp_R}|=|E_p|^n$.
\end{definition}



\begin{definition}[Left and right LCD codes] If a left (resp. right) nice code $C$ over $E_p$ satisfies  $C\cap C^{\perp_L}=\{\boldsymbol{0}\}$ (resp. $C\cap C^{\perp_R}=\{\boldsymbol{0}\}$), it is called left (resp. right) LCD code.
\end{definition}

Note that, for an $E_p$-linear code $C$, $C\oplus C^{\perp_L}=E_p^n$ (resp. $C\oplus C^{\perp_R}=E_p^n$)  if and only if it is a left (resp. right) LCD code over $E_p$.

\begin{definition}[Free code] An $E_p$-linear code $C$ is called free if it can be expressed as a finite direct sum of $E_p$ ($E_p$ as a left $E_p$-module), i.e., $C=E_p\oplus E_p\oplus \cdots \oplus E_p$ where $E_p=\langle x_i\rangle_{E_p}$ for some $x_i \in E_p.$
\end{definition}

Following \cite{Alah22}, we observe that an $E_p$-linear code $C$ is free if and only if $Res(C)$ and $Tor(C)$ are equal, or equivalently, $m_2=0$. \par
The following few results for linear codes over $E_p$ are simple extensions of some results on linear codes over $E_2$ from \cite{Shi21}. We skip their proofs as it simply replaces binary codes with codes over $\mathbb{F}_p$ and binary matrices with matrices over $\mathbb{F}_p$.

\begin{remark}

 There exists no non-zero right LCD code over $E_p$.
 \end{remark}

Therefore, we consider the left LCD codes as LCD codes in our further investigations.

\begin{theorem}\label{thm2}
 Let $C$ be a free linear code over $E_p$ with generator matrix $G_{E_p}$. Then
 \begin{enumerate}
     \item[(i)] $C=\langle rG \rangle_{E_p}$,
     \item[(ii)] $C^{\perp_L}=\langle rH \rangle_{E_p}$,
 \end{enumerate}
 where $G$ is a generator matrix of $Res(C)$ and $H$ is a parity-check matrix of $Res(C)$.
\end{theorem}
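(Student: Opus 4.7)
The plan is to reduce both parts to computations with the $t$-adic decomposition $e = ur + vt$, $u,v \in \mathbb{F}_p$, using three elementary identities that follow directly from the defining relations of $E_p$: (a) $er = e$ for every $e \in E_p$; (b) $re = \alpha(e)\,r$, which holds because $r^2 = r$ and $rt = r\cdot r + (p-1)\,rs = r + (p-1)r = 0$; and (c) $\{r,t\}$ is an $\mathbb{F}_p$-basis of $E_p$, so every element of $E_p^{n}$ admits a unique expression $r\boldsymbol{u} + t\boldsymbol{v}$ with $\boldsymbol{u},\boldsymbol{v}\in\mathbb{F}_p^{n}$. Combined with the observation from the paragraph preceding the theorem that free $E_p$-codes satisfy $Res(C) = Tor(C)$, and hence $|C| = p^{2k}$ where $k = \dim Res(C)$, these ingredients turn each assertion into a matter of inclusion-plus-counting.

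For part (i) I would first show that every row $r\boldsymbol{g}_i$ of $rG$ lies in $C$: by definition of $Res(C)$ there is a lift $\boldsymbol{x}_i \in C$ with $\alpha(\boldsymbol{x}_i) = \boldsymbol{g}_i$, and identity (b) applied coordinate-wise gives $r\boldsymbol{x}_i = r\boldsymbol{g}_i$, which lies in $C$ because $C$ is an $E_p$-submodule. For the reverse inclusion, a general element of $\langle rG\rangle_{E_p}$ has the form $\sum_i e_i(r\boldsymbol{g}_i)$ with $e_i = u_i r + v_i t$; identity (a) collapses each summand to $e_i\boldsymbol{g}_i$, and distributing yields
\[
\sum_i e_i \boldsymbol{g}_i \;=\; r\Bigl(\sum_i u_i\boldsymbol{g}_i\Bigr) + t\Bigl(\sum_i v_i \boldsymbol{g}_i\Bigr) \;=\; r\boldsymbol{u} + t\boldsymbol{v},
\]
with $\boldsymbol{u},\boldsymbol{v}$ ranging independently over $Res(C)$. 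Since the $t$-adic form is unique in each coordinate, this forces $|\langle rG\rangle_{E_p}| = p^{2k} = |C|$, and equality follows.

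For part (ii), part (i) allows me to write any $\boldsymbol{w} \in C$ as $r\boldsymbol{u} + t\boldsymbol{v}$ with $\boldsymbol{u},\boldsymbol{v}\in Res(C)$. A coordinate-wise computation then gives
\[
\langle r\boldsymbol{h}_i,\boldsymbol{w}\rangle \;=\; \sum_j (rh_{ij})(ru_j + tv_j) \;=\; \sum_j h_{ij}u_j\, r \;=\; (\boldsymbol{h}_i\cdot\boldsymbol{u})\,r \;=\; 0,
\]
using $r^2 = r$, $rt = 0$, and the fact that $\boldsymbol{h}_i \in Res(C)^{\perp}$ is orthogonal to $\boldsymbol{u}\in Res(C)$. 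Hence $\langle rH\rangle_{E_p} \subseteq C^{\perp_L}$. The same parameterization as in (i) gives $|\langle rH\rangle_{E_p}| = p^{2(n-k)}$, so only a size comparison remains.

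The principal obstacle is justifying $|C^{\perp_L}| = p^{2(n-k)}$, i.e.\ that a free $E_p$-code is left nice. I would invoke this from the duality facts for $E_p$-codes mentioned just before the theorem; alternatively, the argument above already shows $Res(C^{\perp_L}) \supseteq Res(C)^{\perp}$, and running the parameterization in reverse exhibits $C^{\perp_L}$ as free of rank $n-k$, which again reduces to the identities $r^2=r$, $rt = 0$, and $er = e$. Once this count is in hand, the equality $C^{\perp_L} = \langle rH\rangle_{E_p}$ drops out.
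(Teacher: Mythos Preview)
The paper does not prove this theorem: it explicitly says that Theorem~\ref{thm2} (along with the next few results) is a ``simple extension'' of the corresponding $E_2$ statement in \cite{Shi21} and omits the proof, so there is no in-paper argument to compare your proposal against. Your $t$-adic/counting approach is in spirit exactly the kind of elementary extension the paper has in mind, and part (i) is complete as written.

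The only genuine loose end is the one you flag yourself: the size of $C^{\perp_L}$. Neither of your two suggested fixes quite lands as stated. There are no ``duality facts for $E_p$-codes mentioned just before the theorem'' in this paper---only the definitions of left/right nice and free codes---so the first option has nothing to invoke. The second option (``running the parameterization in reverse'') is the right idea but needs two more identities you did not record: $tr=t$ (an instance of $er=e$) and $t^2=0$. With those in hand, writing an arbitrary $\boldsymbol z=r\boldsymbol a+t\boldsymbol b\in E_p^n$ and an arbitrary $\boldsymbol w=r\boldsymbol u+t\boldsymbol v\in C$ (via part (i)) gives
\[
\langle \boldsymbol z,\boldsymbol w\rangle=(\boldsymbol a\cdot\boldsymbol u)\,r+(\boldsymbol b\cdot\boldsymbol u)\,t,
\]
so $\boldsymbol z\in C^{\perp_L}$ forces $\boldsymbol a,\boldsymbol b\in Res(C)^\perp$. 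This yields the reverse inclusion $C^{\perp_L}\subseteq\langle rH\rangle_{E_p}$ directly, without needing to know in advance that free codes are left nice; the counting then becomes superfluous. Adding this one computation makes your proof self-contained.
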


\begin{theorem}\label{thm1e}
    An LCD code over $E_p$ is always free.
\end{theorem}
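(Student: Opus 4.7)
The plan is to show the contrapositive: if $C$ is not free, then $C \cap C^{\perp_L} \neq \{\boldsymbol{0}\}$. Recall (as noted right before Theorem \ref{thm2}) that $C$ is free iff $Res(C)=Tor(C)$, and one always has the inclusion $Res(C)\subseteq Tor(C)$: given $\boldsymbol{x}\in C$ with $\alpha(\boldsymbol{x})=\boldsymbol{v}$, the module action yields $t\boldsymbol{x}\in C$, and a direct computation using $tr=t$ and $t^2=0$ shows $t\boldsymbol{x}=\boldsymbol{v}t$, so $\boldsymbol{v}\in Tor(C)$. Thus freeness is equivalent to the reverse inclusion $Tor(C)\subseteq Res(C)$, which is what I will target using the LCD hypothesis.

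The technical heart of the argument is the following correspondence. For any $\boldsymbol{v}\in\mathbb{F}_p^n$, the vector $\boldsymbol{v}t\in E_p^n$ lies in $C\cap C^{\perp_L}$ if and only if $\boldsymbol{v}\in Tor(C)\cap Res(C)^{\perp}$, where $Res(C)^{\perp}$ denotes the Euclidean dual of $Res(C)$ in $\mathbb{F}_p^n$. The membership $\boldsymbol{v}t\in C\Leftrightarrow \boldsymbol{v}\in Tor(C)$ is by definition. For the orthogonality side, writing each codeword $\boldsymbol{w}\in C$ in $t$-adic form $\boldsymbol{w}=\alpha(\boldsymbol{w})r+\boldsymbol{b}t$ with $\alpha(\boldsymbol{w}),\boldsymbol{b}\in\mathbb{F}_p^n$, the relations $rt=0,\ tr=t,\ t^2=0$ collapse the inner product to
\[
\langle \boldsymbol{v}t,\boldsymbol{w}\rangle=\sum_{i=1}^{n}(v_it)(\alpha(w_i)r+b_it)=\Bigl(\sum_{i=1}^{n}v_i\,\alpha(w_i)\Bigr)t=\langle\boldsymbol{v},\alpha(\boldsymbol{w})\rangle_{\mathbb{F}_p}\,t.
\]
This vanishes for every $\boldsymbol{w}\in C$ precisely when $\boldsymbol{v}\perp Res(C)$ in $\mathbb{F}_p^n$, proving the claimed correspondence.

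Now assume $C$ is LCD, so $C\cap C^{\perp_L}=\{\boldsymbol{0}\}$. Since $t\neq 0$ in $E_p$, the map $\boldsymbol{v}\mapsto \boldsymbol{v}t$ is injective on $\mathbb{F}_p^n$, so the correspondence forces $Tor(C)\cap Res(C)^{\perp}=\{\boldsymbol{0}\}$ inside $\mathbb{F}_p^n$. A standard dimension count over $\mathbb{F}_p$ then gives
\[
\dim_{\mathbb{F}_p}Tor(C)+\dim_{\mathbb{F}_p}Res(C)^{\perp}\leq n,
\]
and since $\dim Res(C)^{\perp}=n-\dim Res(C)$, we conclude $\dim Tor(C)\leq \dim Res(C)$. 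Combined with $Res(C)\subseteq Tor(C)$, this yields $Res(C)=Tor(C)$, hence $m_2=0$, hence $C$ is free.

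The part that will need the most care is the inner-product computation, since $E_p$ is noncommutative and non-unital: one must keep $\boldsymbol{v}t$ on the left, use $tr=t$ and $t^2=0$ in the right order, and check that only the residue part of $\boldsymbol{w}$ survives. Once that identity is in place, the dimension count and the prior observation $Res(C)\subseteq Tor(C)$ close the argument cleanly.
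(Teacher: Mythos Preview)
Your argument is correct. The key identity $\langle \boldsymbol{v}t,\boldsymbol{w}\rangle=\langle\boldsymbol{v},\alpha(\boldsymbol{w})\rangle_{\mathbb{F}_p}\,t$ is verified exactly as you wrote it (using $tr=t$, $t^2=0$), and the dimension count together with $Res(C)\subseteq Tor(C)$ finishes cleanly. Note that you actually prove something slightly stronger than stated: you only use $C\cap C^{\perp_L}=\{\boldsymbol{0}\}$, never the left-niceness condition that is also part of the paper's definition of an LCD code.

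As for comparison: the paper does not supply its own proof of this theorem. It explicitly groups this result with a few others and says their proofs are omitted because they are direct extensions of the corresponding $E_2$ arguments in \cite{Shi21}, obtained by replacing binary objects with $\mathbb{F}_p$-objects. Your write-up is therefore a self-contained substitute for that deferred argument; the route via $Tor(C)\cap Res(C)^\perp=\{\boldsymbol{0}\}$ is the natural one and is essentially what the $E_2$ proof in \cite{Shi21} does as well, so there is no genuine methodological divergence to discuss.
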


Authors in \cite{Shi21} have demonstrated a method for constructing LCD codes over $E_2$ using LCD codes over $\mathbb{F}_2$. In the next result, we extend that method for constructing LCD codes over $E_p$ by using LCD codes over $\mathbb{F}_p$. We omit the proof as it follows the same procedure given in Proposition 2 in \cite{Shi21}.

\begin{theorem}\label{thm3}
    Let the matrix $G_{k \times n}$ generates an LCD code $D$ over $\mathbb{F}_p$. Then the $E_p$-span of $rG$, $\langle rG\rangle _{E_p}$ is an LCD code over $E_p$.
\end{theorem}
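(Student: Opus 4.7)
The plan is to identify $C := \langle rG\rangle_{E_p}$ explicitly, determine its left dual via Theorem \ref{thm2}, check the niceness condition by a rank count, and then deduce $C \cap C^{\perp_L} = \{\mathbf{0}\}$ by transferring the LCD property of $D$ through the reduction map $\alpha$.

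First I would unpack the elements of $C$. Using $er = e$ for every $e \in E_p$, any $\mathbf{x} = \sum_{i=1}^{k} e_i (rG)_i \in C$ simplifies to $\sum_{i} e_i G_i$, where the $G_i$ are the rows of $G$ acting entrywise via the $\mathbb{F}_p$-action on $E_p$. Writing $e_i = u_i r + v_i t$ in $t$-adic form and using that $G$ has $\mathbb{F}_p$-linearly independent rows, the representation is unique, so $C$ is free of rank $k$ with $|C| = p^{2k}$ and $\mathrm{Res}(C) = \mathrm{rowspan}_{\mathbb{F}_p}(G) = D$. Theorem \ref{thm2}(ii) then yields $C^{\perp_L} = \langle rH\rangle_{E_p}$ for any parity-check matrix $H$ of $D$, and the same analysis gives $C^{\perp_L}$ free of rank $n-k$ with $\mathrm{Res}(C^{\perp_L}) = D^\perp$ and $|C^{\perp_L}| = p^{2(n-k)}$. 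Hence $|C|\cdot|C^{\perp_L}| = p^{2n} = |E_p|^n$, so $C$ is nice.

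The core step is showing $C \cap C^{\perp_L} = \{\mathbf{0}\}$. Take $\mathbf{x}$ in the intersection and apply $\alpha$ coordinatewise; the image lies in $\mathrm{Res}(C) \cap \mathrm{Res}(C^{\perp_L}) = D \cap D^\perp = \{\mathbf{0}\}$ since $D$ is LCD, so every entry of $\mathbf{x}$ lies in $I = \{mt \mid 0 \le m < p\}$. Writing $\mathbf{x} = \sum e_i G_i$ from the $C$ side with $e_i = u_i r + v_i t$, the vanishing of $\alpha(\mathbf{x}) = \sum u_i G_i$ together with the independence of the rows of $G$ forces $u_i = 0$, hence $\mathbf{x} = t \sum v_i G_i$. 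Symmetrically, from the $C^{\perp_L}$ side, $\mathbf{x} = t \sum v'_j H_j$. Because the action of $t$ is injective on $\mathbb{F}_p^n$ (the ideal $I$ has $p$ distinct elements $mt$), the two $\mathbb{F}_p$-vectors $\sum v_i G_i$ and $\sum v'_j H_j$ coincide and lie simultaneously in $D$ and $D^\perp$, so both equal $\mathbf{0}$; thus $\mathbf{x} = \mathbf{0}$.

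The main obstacle is not a single hard computation but the need to invoke the LCD hypothesis on $D$ \emph{twice} — once at the residue level to eliminate the $r$-part of $\mathbf{x}$, and again after stripping a factor of $t$ to eliminate the $t$-part. Keeping the $t$-adic bookkeeping consistent, and verifying that the $\mathbb{F}_p$-action on $E_p$ commutes cleanly with both the span computation and the reduction map, is where the proof requires the most care.
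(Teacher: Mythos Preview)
Your argument is correct. The paper itself omits the proof of this theorem, pointing instead to Proposition~2 of \cite{Shi21} and merely asserting that the same procedure carries over from $E_2$ to $E_p$; so there is no written proof in the paper to compare against in detail. The route you take---showing $C=\langle rG\rangle_{E_p}=rD\oplus tD$ is free with $\mathrm{Res}(C)=D$, invoking Theorem~\ref{thm2}(ii) to get $C^{\perp_L}=\langle rH\rangle_{E_p}$ with $\mathrm{Res}(C^{\perp_L})=D^\perp$, verifying niceness by a size count, and then killing any $\mathbf{x}\in C\cap C^{\perp_L}$ via $\alpha$ followed by stripping the factor $t$---is exactly the natural unpacking of that cited procedure. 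Your explicit observation that the LCD hypothesis on $D$ is used twice (once on the $r$-component, once on the $t$-component) is the crux, and your justification that $\mathbf{v}\mapsto t\mathbf{v}$ is injective is the right way to pass from $t\mathbf{y}_1=t\mathbf{y}_2$ to $\mathbf{y}_1=\mathbf{y}_2\in D\cap D^\perp$. Nothing is missing.
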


\begin{theorem}\label{thm1a}\cite{Shi21}
     A free $E_2$-LCD code $C$ is generated by the matrix $rG_2$ where $G_2$ generates a binary LCD code $D$.
\end{theorem}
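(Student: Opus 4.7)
The plan is to derive the result directly from Theorem \ref{thm2} together with a contradiction argument, using the residue code as the natural candidate for the binary LCD code $D$.

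First, since $C$ is free, Theorem \ref{thm2} gives the representations $C=\langle rG\rangle_{E_2}$ and $C^{\perp_L}=\langle rH\rangle_{E_2}$, where $G$ is a generator matrix of $Res(C)$ and $H$ is a parity-check matrix of $Res(C)$. Taking $G_2:=G$ and $D:=Res(C)$, the first half of the statement is immediate: $C$ is generated by $rG_2$. It remains to establish that $D$ is a binary LCD code, i.e. $D\cap D^\perp=\{\boldsymbol{0}\}$.

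I would argue by contradiction. Suppose there exists $\boldsymbol{v}\in D\cap D^\perp$ with $\boldsymbol{v}\neq\boldsymbol{0}$. Since $\boldsymbol{v}\in D$ is an $\mathbb{F}_2$-combination of the rows of $G$, write $\boldsymbol{v}=\sum_i u_i\boldsymbol{g}_i$ with $u_i\in\mathbb{F}_2$; then
\[
r\boldsymbol{v}=\sum_i u_i(r\boldsymbol{g}_i)=\sum_i (u_i r)(r\boldsymbol{g}_i),
\]
using $r^2=r$, where each coefficient $u_i r\in\{0,r\}\subset E_2$. Hence $r\boldsymbol{v}\in\langle rG\rangle_{E_2}=C$. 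The analogous expansion of $\boldsymbol{v}$ as an $\mathbb{F}_2$-combination of the rows of $H$, together with the same trick $r^2=r$, shows that $r\boldsymbol{v}\in\langle rH\rangle_{E_2}=C^{\perp_L}$.

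Finally, since multiplication of a non-zero entry $1\in\mathbb{F}_2$ by $r$ yields $r\neq 0$ in $E_2$, the vector $r\boldsymbol{v}$ has a non-zero entry wherever $\boldsymbol{v}$ does; in particular $r\boldsymbol{v}\neq\boldsymbol{0}$. This produces a non-trivial element of $C\cap C^{\perp_L}$, contradicting the assumption that $C$ is LCD. Hence $D$ must be LCD, completing the argument. The only delicate point is the careful bookkeeping when realising $r\boldsymbol{v}$ as an $E_2$-linear combination of the rows of $rG$ (respectively $rH$), where the absence of a unity in $E_2$ forces us to replace the formal $\mathbb{F}_2$-coefficients $u_i$ by the $E_2$-coefficients $u_ir$; once this is recognised, the rest is routine.
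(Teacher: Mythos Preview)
Your argument is correct. Invoking Theorem~\ref{thm2} to write $C=\langle rG\rangle_{E_2}$ and $C^{\perp_L}=\langle rH\rangle_{E_2}$, and then lifting a hypothetical non-zero $\boldsymbol{v}\in Res(C)\cap Res(C)^\perp$ to the non-zero element $r\boldsymbol{v}\in C\cap C^{\perp_L}$ via the identity $r^2=r$, is a clean and valid route to the conclusion.

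As for comparison: the paper does not actually supply a proof of this statement. Theorem~\ref{thm1a} is quoted verbatim from \cite{Shi21}, and even for the $E_p$-generalisation (Theorem~\ref{thm4}) the authors write that they ``omit the proof as it follows the same procedure as in proof of Theorem~\ref{thm1a}''. So there is no in-paper argument to benchmark against. Given that your proof rests on Theorem~\ref{thm2}, which is itself the $E_p$-extension of the structural lemma from \cite{Shi21}, your approach is almost certainly in the same spirit as the original, and nothing further is needed.
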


In the next result, we extend Theorem \ref{thm1a} for LCD codes over $E_p$ and remove the word ``free", as LCD codes are always free by Theorem \ref{thm1e}. We omit the proof as it follows the same procedure as in the proof of Theorem \ref{thm1a}.

\begin{theorem}\label{thm4}
     An LCD code $C$ over $E_p$ has the generator matrix of the form $rG$ for some generator matrix $G$ of an $\mathbb{F}_p$-LCD code $D$.
\end{theorem}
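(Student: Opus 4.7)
The plan is to combine Theorems \ref{thm1e} and \ref{thm2} to extract the shape $rG$, and then to verify that the $\mathbb{F}_p$-code generated by $G$ is itself LCD. First, I would invoke Theorem \ref{thm1e} to conclude that any LCD code $C$ over $E_p$ is free. Then, by Theorem \ref{thm2}(i), I obtain $C = \langle rG \rangle_{E_p}$ where $G$ is a generator matrix of the residue code $D := Res(C)$. This handles the ``form'' part of the statement, and it remains only to show that $D$ is an LCD code over $\mathbb{F}_p$.

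For the main step, take any $v \in D \cap D^{\perp}$ and consider the vector $rv \in E_p^{n}$. On one hand, since $v \in D$, we have $v = uG$ for some $u \in \mathbb{F}_p^{m_1}$, and hence $rv = u(rG)$ lies in the $\mathbb{F}_p$-span of the rows of $rG$, which by the generator-matrix convention of the paper sits inside $C$. On the other hand, since $v \in D^{\perp}$, we can write $v = wH$ where $H$ is the parity-check matrix of $D$ used in Theorem \ref{thm2}(ii), and then $rv = w(rH)$ lies in the $\mathbb{F}_p$-span of the rows of $rH$, which is contained in $C^{\perp_L}$. Therefore $rv \in C \cap C^{\perp_L} = \{\boldsymbol{0}\}$, so $rv = \boldsymbol{0}$.

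The final step is to deduce $v = \boldsymbol{0}$ from $rv = \boldsymbol{0}$. Writing $v = (v_1, \ldots, v_n)$ with $v_i \in \mathbb{F}_p$, each entry $v_i r \in E_p$ has the $t$-adic decomposition $v_i r + 0 \cdot t$, so $v_i r = 0$ forces $v_i = 0$. Hence $D \cap D^{\perp} = \{\boldsymbol{0}\}$, proving that $D$ is LCD.

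The main obstacle I expect is the bookkeeping between the two spans $\langle \cdot \rangle_{E_p}$ and $\langle \cdot \rangle_{\mathbb{F}_p}$ that enter the definition of a generator matrix, and ensuring that the identity $r(uG) = u(rG)$ (and the analogous one with $H$) is justified via the distributive $\mathbb{F}_p$-action $e(u \oplus_{\mathbb{F}_p} v) = eu + ev$ on $E_p$. Once this commutation is in place, the orthogonality chain and the final injectivity of left multiplication by $r$ on $\mathbb{F}_p$-scalars are routine.
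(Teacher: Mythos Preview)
Your argument is correct. The paper itself does not give a proof of this theorem; it merely states that the proof ``follows the same procedure'' as the $E_2$ case from \cite{Shi21} (quoted here as Theorem \ref{thm1a}). Your route---freeness via Theorem \ref{thm1e}, the form $C=\langle rG\rangle_{E_p}$ and $C^{\perp_L}=\langle rH\rangle_{E_p}$ via Theorem \ref{thm2}, and then the push $v\mapsto rv$ sending $Res(C)\cap Res(C)^{\perp}$ into $C\cap C^{\perp_L}$---is the natural argument and is almost certainly what the referenced procedure does. One minor shortcut: for the containment $rv\in C$ you can invoke Theorem \ref{thm5} ($rRes(C)\subseteq C$) directly, which sidesteps the span bookkeeping you flagged as a potential obstacle.
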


The next result gives a relationship between an LCD code over $E_p$ and an LCD code over $\mathbb{F}_p$.
\begin{theorem}\label{thm5e}
     An $E_p$-linear code $C$ is LCD if and only if it is generated by a matrix of the form $rG$ where $G$ generates an $\mathbb{F}_p$-LCD code $D$.
\end{theorem}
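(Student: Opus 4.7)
The statement of Theorem \ref{thm5e} is an ``if and only if'' characterization, and both directions have essentially been done already; the plan is just to splice them together cleanly.

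For the backward implication, suppose $C$ is generated by $rG$ where $G$ generates an $\mathbb{F}_p$-LCD code $D$. Then I would simply invoke Theorem \ref{thm3}, which asserts that $\langle rG\rangle_{E_p}$ is LCD over $E_p$ whenever $G$ generates an $\mathbb{F}_p$-LCD code. So $C = \langle rG\rangle_{E_p}$ is LCD, and no further work is needed.

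For the forward implication, suppose $C$ is an LCD code over $E_p$. By Theorem \ref{thm1e}, $C$ is automatically free, so Theorem \ref{thm4} applies and produces a generator matrix for $C$ of the form $rG$, where $G$ is a generator matrix of some $\mathbb{F}_p$-LCD code $D$. This is exactly the required form.

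Combining the two implications yields the biconditional, so the theorem follows. There is no genuine obstacle: the substance of the result lives entirely in Theorem \ref{thm3} (the ``construction'' direction) and Theorems \ref{thm1e}--\ref{thm4} (the ``structure'' direction together with the fact that LCD forces freeness, which allows us to drop the word ``free'' from Theorem \ref{thm1a}'s hypothesis). Thus the proof is a one-line appeal to these two previously established facts, and the only care needed is to note the role of Theorem \ref{thm1e} in ensuring that Theorem \ref{thm4} is applicable to \emph{every} LCD code over $E_p$, not just to those assumed a priori to be free.
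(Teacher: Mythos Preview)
Your approach matches the paper's: invoke Theorem~\ref{thm4} for the forward direction and Theorem~\ref{thm3} for the converse. One small point you glossed over in the backward direction: in this non-unital setting, ``$C$ is generated by $rG$'' means $C=\langle rG\rangle_{E_p}\cup\langle rG\rangle_{\mathbb{F}_p}$, not just $\langle rG\rangle_{E_p}$, so you cannot immediately write $C=\langle rG\rangle_{E_p}$; the paper closes this by observing that $\langle rG\rangle_{\mathbb{F}_p}\subseteq\langle rG\rangle_{E_p}$ (take scalars $0,r,2r,\dots,(p-1)r$ and use $r^2=r$). Also, your appeal to Theorem~\ref{thm1e} in the forward direction is harmless but unnecessary, since Theorem~\ref{thm4} already applies to arbitrary $E_p$-LCD codes.
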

\begin{proof}
   By Theorem \ref{thm4}, an LCD code $C$ over $E_p$ is generated by a matrix of the form $rG$ where $G$ generates an $\mathbb{F}_p$-LCD code $D$. \par

 Conversely, suppose that $C$ is an $E_p$-linear code and $rG$ is its generator matrix where $G$ generates an $\mathbb{F}_p$-LCD code $D$. By Theorem \ref{thm3}, the $E_p$-span of $rG$, $\langle rG\rangle _{E_p}$ is an LCD code over $E_p$. It is easy to see that the $E_p$-span of $rG$ always contains the additive span of $rG$ by taking scalars as $0$ or $r$. Therefore, the code $C$ generated by the matrix $rG$ is an LCD code over $E_p$.
\end{proof}

\begin{definition}[Monomial equivalent codes]
    Two $E_p$-linear codes $C$ and $C'$ of the same length $n$ are called monomial equivalent if there exists a monomial matrix $M_{n\times n}$  over $\mathbb{F}_p$ (a matrix consisting only one non-zero entry from $\mathbb{F}_p$ in each row and each column) such that $C=C'M=\{xM \mid x\in C'\}$.
\end{definition}

 The following result is crucial for our study, as it gives a condition under which two free linear codes over $E_p$  become monomial equivalent.

\begin{theorem}\label{thm4E}
    Let $B$ and $C$ be two free $E_p$-linear codes of the same length $n$. Then, $B$ and $C$ are monomial equivalent if and only if their residue codes $Res(B)$ and $Res(C)$ are monomial equivalent.
\end{theorem}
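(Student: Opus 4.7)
The plan is to leverage Theorem \ref{thm2}, which says that a free $E_p$-linear code has the form $\langle rG \rangle_{E_p}$ where $G$ generates its residue code, and to transport monomial equivalence between $E_p$ and $\mathbb{F}_p$ using the reduction map $\alpha$ together with its distinguished section $u \mapsto ur$. A preliminary observation I will need is that the set $J$ of right zero divisors of $E_p$ coincides with the maximal ideal $I=\{vt : v \in \mathbb{F}_p\}$: for any $e = ur + vt$ and any $f \in E_p$, one computes $fe = uf$ using $ft = pf = 0$, so $e$ is a right zero divisor iff $u = 0$. Consequently $E_p \setminus J$ is exactly $\{e \in E_p : \alpha(e) \neq 0\}$, and $u \mapsto ur$ realises $\mathbb{F}_p^*$ inside $E_p \setminus J$.

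For the forward direction, assume $C = BM$ for a monomial matrix $M$ over $E_p$. Apply $\alpha$ entrywise; since $\alpha$ is a ring homomorphism it commutes with matrix multiplication, giving $Res(C) = Res(B)\,\alpha(M)$. The preliminary observation shows that $\alpha(M)$ has a nonzero entry in $\mathbb{F}_p$ in exactly the positions where $M$ had its nonzero entries, so $\alpha(M)$ is monomial over $\mathbb{F}_p$ and the residue codes are monomial equivalent.

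For the converse, assume $Res(C) = Res(B)\,M'$ for a monomial matrix $M'$ over $\mathbb{F}_p$. Lift $M'$ entrywise via $u \mapsto ur$ to a matrix $M$ over $E_p$; this $M$ is monomial since its nonzero entries $ur$ satisfy $\alpha(ur) = u \neq 0$, placing them in $E_p \setminus J$. By Theorem \ref{thm2}, $B = \langle rG_B \rangle_{E_p}$ and $C = \langle rG_C \rangle_{E_p}$, where $G_B$ and $G_C$ generate the respective residue codes. Using $er = e$ and $ue = eu$ for $u \in \mathbb{F}_p$, $e \in E_p$, a short computation gives $(rG_B)M = r\,(G_B M')$. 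Since $G_B M'$ is another generator matrix of $Res(C) = Res(B)M'$, it differs from $G_C$ by an invertible matrix over $\mathbb{F}_p$, and such a change of basis preserves the $E_p$-span of the $r$-scaled rows, yielding $BM = C$. The main obstacle is this last bookkeeping step in the non-unital setting, which reduces to checking that invertible $\mathbb{F}_p$-row operations on a matrix whose rows have the form $r\boldsymbol{v}$ with $\boldsymbol{v} \in \mathbb{F}_p^n$ preserve the $E_p$-span of those rows, itself a direct consequence of the commuting relation $ue = eu$ and the absorption identity $er = e$.
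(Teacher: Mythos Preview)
Your proof is correct. For the converse direction (residue equivalence $\Rightarrow$ $E_p$-equivalence) you and the paper do essentially the same thing: lift the $\mathbb{F}_p$-monomial matrix $M'$ to $M = M'r$ and use $(rG_B)(M'r) = r(G_BM')$; the paper avoids your final change-of-basis step by simply declaring $G_BM'$ to be the chosen generator matrix of $Res(C)$, so that step is absorbed into a ``without loss of generality''. The forward direction, however, is handled differently. The paper argues by contrapositive: if the residue codes are not monomial equivalent it asserts the existence of a column of $G_1$ that is not an $\mathbb{F}_p$-multiple of any column of $G_2$, locates a zero/nonzero mismatch in some position, and pushes this mismatch through the $r$-scaling to conclude that $B$ and $C$ are not monomial equivalent --- an argument that is somewhat informal about what ``not monomial equivalent'' entails at the level of generator-matrix columns. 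Your route of applying the reduction homomorphism $\alpha$ coordinatewise, together with the identification $E_p \setminus J = \alpha^{-1}(\mathbb{F}_p^*)$, is cleaner: it works directly with the codes rather than with particular generator matrices and makes the preservation of the monomial structure transparent.
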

\begin{proof}
  Let $B$ and $C$ be two free $E_p$-linear codes of the same length $n$. Suppose that these two codes are monomial equivalent. Then, there exists a monomial matrix $M$ over $\mathbb{F}_p$  such that $B=CM$. Since $\alpha(B)=\alpha(CM)=\alpha(C)M$, $Res(B)=Res(C)M$. Therefore, the residue codes $Res(B)$ and $Res(C)$ are monomial equivalent. \par
   Conversely, suppose that the residue codes $Res(B)$ and $Res(C)$ are monomial equivalent. By Theorem \ref{thm2}, the free $E_p$-linear codes $B$ and $C$ will have generator matrices of the form $rG_1$ and $rG_2$, respectively, where $G_1$ and $G_2$ are generator matrices of their respective residue codes. Since $Res(B)$ and $Res(C)$ are monomial equivalent, there exists a monomial matrix $M$ over $\mathbb{F}_p$ such that $Res(B)=Res(C)M$. Without loss of generality, we may assume that $G_1=G_2M.$ Since $G_1=G_2M$, $rG_1=rG_2M$. This implies that $\langle rG_1 \rangle_{E_p}=\langle rG_2M \rangle _{E_p}= \langle rG_2 \rangle_{E_p} M$. Therefore, $B=CM$. Thus, the free $E_p$-linear codes $B$ and $C$ are monomial equivalent.

\end{proof}

Next, we give an example which shows that Theorem \ref{thm4E} may not be valid for non-free linear codes over $E_p$.
\begin{example}
 Let $$B=\{ 000,rr0,ss0,tt0,0t0,rs0,sr0,t00\}$$ and $$C=\{000,r0r,s0s,t0t,0t0,rtr,sts,ttt\}$$  be two $E_2$-linear codes. Then, $Res(B)=\{000, 110\}$, $Res(C)=\{000, 101\}$, $Tor(B)=\{000,110,010,100\}$ and $Tor(C)=\{000,101,010,111\}$. We see that the $E_2$-linear codes $B$ and $C$ are not free, as their respective torsion and residue codes are unequal. Next, a short calculation shows that $Res(B)$ and $Res(C)$ are monomial equivalent over $\mathbb{F}_2$. On the other hand, $B$ has $2$ codewords of weight $1$, and $C$ has only one codeword of weight $1$. Therefore, the $E_2$-linear codes $B$ and $C$ are not monomial equivalent.
\end{example}

The next result enumerates monomial inequivalent LCD codes over $E_p$ for a fixed code length.
\begin{theorem}\label{thm7a}
    The number of monomial inequivalent $E_p$-LCD codes and the number of monomial inequivalent  $\mathbb{F}_p$-LCD codes of the same lengths are always equal.
\end{theorem}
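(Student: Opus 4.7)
The plan is to set up a bijection between the set of monomial equivalence classes of $\mathbb{F}_p$-LCD codes of length $n$ and the set of monomial equivalence classes of $E_p$-LCD codes of length $n$, and then read off the equality of cardinalities. The tools I would rely on are Theorem \ref{thm1e} (every $E_p$-LCD code is free), Theorem \ref{thm5e} (the $E_p$-LCD codes are exactly those generated by matrices of the form $rG$, where $G$ generates some $\mathbb{F}_p$-LCD code $D$), and Theorem \ref{thm4E} (monomial equivalence of free $E_p$-codes is controlled by monomial equivalence of their residues).

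First I would define the map $\Phi$ sending a class $[D]$ to the class of $C = \langle rG \rangle_{E_p}$, where $G$ is any generator matrix of $D$. By Theorem \ref{thm5e}, $C$ is an $E_p$-LCD code, and by Theorem \ref{thm1e} it is free, so Theorem \ref{thm4E} applies to it. The key identity needed is $Res(C) = D$: since $er = e$ for every $e \in E_p$, an arbitrary element of $\langle rG \rangle_{E_p}$ has the form $\sum_i e_i (r g_i) = \sum_i e_i g_i$ with $g_i \in \mathbb{F}_p^n$, and applying the reduction $\alpha$ coordinate-wise gives $\sum_i \alpha(e_i) g_i$, which describes the row space of $G$ over $\mathbb{F}_p$, namely $D$.

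For well-definedness of $\Phi$, suppose $D_1$ and $D_2$ are $\mathbb{F}_p$-monomial equivalent $\mathbb{F}_p$-LCD codes. Then their respective images $C_1 = \langle r G_1 \rangle_{E_p}$ and $C_2 = \langle r G_2 \rangle_{E_p}$ are free $E_p$-codes whose residues coincide with $D_1$ and $D_2$; Theorem \ref{thm4E} then yields that $C_1$ and $C_2$ are $E_p$-monomial equivalent. For injectivity the same argument runs backwards: if $C_1$ and $C_2$ are $E_p$-monomial equivalent, Theorem \ref{thm4E} forces $Res(C_1) = D_1$ and $Res(C_2) = D_2$ to be $\mathbb{F}_p$-monomial equivalent. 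Surjectivity is immediate from Theorem \ref{thm5e}, which says every $E_p$-LCD code is already presented in the required form $\langle rG \rangle_{E_p}$.

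The main obstacle I expect is the intermediate computation $Res(\langle rG \rangle_{E_p}) = D$ together with independence of $\Phi$ from the chosen generator matrix $G$. Two generator matrices of the same $\mathbb{F}_p$-LCD code differ by an invertible $\mathbb{F}_p$-transformation on the rows, and one must verify that this does not alter the $E_p$-span of $rG$ (again using $er = e$ and the $\mathbb{F}_p$-action $eu = ue$), so that $\Phi$ depends only on $[D]$. Once these bookkeeping points are in place, the bijection is established and the theorem follows.
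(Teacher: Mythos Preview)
Your proposal is correct and follows essentially the same route as the paper: the paper's proof simply cites Theorem~\ref{thm1e}, Theorem~\ref{thm4E}, and Theorem~\ref{thm5e} in one breath and declares the result immediate, while you spell out the bijection $\Phi$ between equivalence classes and check well-definedness, injectivity, and surjectivity using exactly those three ingredients. Your additional bookkeeping (verifying $Res(\langle rG\rangle_{E_p})=D$ and independence from the choice of $G$) is sound and fills in details the paper leaves implicit.
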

\begin{proof}
   By Theorem \ref{thm1e}, an LCD code $C$ over $E_p$ is always free. Moreover, by Theorem \ref{thm4E}, two free $E_p$-codes $B$ and $C$ are monomial equivalent if and only if their residue codes $Res(B)$ and $Res(C)$ are monomial equivalent. Then, the proof follows directly from Theorem \ref{thm5e}.
\end{proof}

Our next purpose is to enumerate monomial inequivalent LCD codes over $E_2$ and $E_3$ for a fixed code length. We use Propositions $4$, $5$ and Tables $4$, $6$ from \cite{Araya19} together with Theorem \ref{thm7a} to enumerate monomial inequivalent LCD codes of a fixed length $n$ over $E_2$ and $E_3$ and list them in Tables \ref{Tab1} and  \ref{Tab2}, respectively. In the tables, the number of monomial inequivalent LCD codes of length $n$ is denoted by $N$. These numbers have been calculated over $E_2$ up to length $13$ and over $E_3$ up to length $10.$

\begin{longtable}{|c|c|c|c|c|c|c|c|c|c|c|c|c|c|c|}

\caption{\label{Tab1} No. of monomial inequivalent LCD codes over $E_2$.}\\
\hline

 $n$ & $1$ & $2$ & $3$ & $4$ & $5$ & $6$ & $7$ & $8$ & $9$ & $10$ & $11$ & $12$ & $13$ \\

 \hline
 $N$ & $2$ & $3$ & $6$ & $10$ & $18$ & $34$& $66$ & $138$ & $312$ & $790$ & $2234$ & $7534$ & $30620$ \\

 \hline

\end{longtable}

\begin{longtable}{|c|c|c|c|c|c|c|c|c|c|c|c|}
\caption{\label{Tab2} No. of monomial inequivalent LCD codes over $E_3$.} \\

\hline
 $n$ & $1$ & $2$ & $3$ & $4$ & $5$ & $6$ & $7$ & $8$ & $9$ & $10$ \\

\hline
 $N$ & $2$ & $4$ & $6$ & $12$ & $24$ & $49$& $116$ & $331$ & $1136$ & $5590$   \\

 \hline

\end{longtable}

In the next result, we provide a lower bound for the number of monomial inequivalent LCD codes over $E_3$ for any arbitrary length $n$.
\begin{theorem}
    Let $\phi(n,m)$ be the number of ternary LCD codes of length $n$ and dimension $m$. Then there exist at least $ \sum_{m=0}^{m=n}\big\lceil\frac{\phi(n,m)}{{2^{n-1}n!}} \big\rceil$  monomial inequivalent $E_3$-LCD codes of length $n$.
\end{theorem}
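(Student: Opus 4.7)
The plan is to reduce the counting problem from $E_3$ to $\mathbb{F}_3$ via Theorem \ref{thm7a} and then apply a standard orbit-counting argument to ternary LCD codes under the action of the monomial group.

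First, I would invoke Theorem \ref{thm7a}, which tells us that the number of monomial inequivalent $E_3$-LCD codes of length $n$ equals the number of monomial inequivalent ternary LCD codes of length $n$. So it suffices to bound the latter quantity from below.

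Next, I would fix a dimension $m$ with $0 \le m \le n$ and analyze the natural action of the group $\mathcal{M}_n$ of $n \times n$ monomial matrices over $\mathbb{F}_3$ on the set of ternary linear codes of length $n$ and dimension $m$, where $M \in \mathcal{M}_n$ acts by $C \mapsto CM$. Monomial equivalence classes are exactly the orbits of this action. The order of $\mathcal{M}_n$ is $|(\mathbb{F}_3^*)^n| \cdot n! = 2^n \cdot n!$. Crucially, every scalar matrix $\alpha I$ with $\alpha \in \mathbb{F}_3^*$ acts trivially on any linear code $C$, because $\alpha C = C$. Hence the kernel of the action contains $\{I, -I\}$, so the induced action factors through a group of order at most $2^n \cdot n!/2 = 2^{n-1} \cdot n!$. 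By the orbit-stabilizer theorem, every orbit has size at most $2^{n-1} \cdot n!$, and consequently the number of orbits among ternary LCD codes of length $n$ and dimension $m$ is at least
\[
\left\lceil \frac{\phi(n,m)}{2^{n-1}\, n!} \right\rceil.
\]

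Finally, since an LCD code has a well-defined dimension $m \in \{0,1,\ldots,n\}$ and codes of distinct dimensions are never monomially equivalent, summing the orbit lower bounds over $m$ gives that the total number of monomial inequivalent ternary LCD codes of length $n$ is at least $\sum_{m=0}^{n} \lceil \phi(n,m)/(2^{n-1} n!) \rceil$. Combined with Theorem \ref{thm7a}, this is the desired bound for $E_3$. The only subtlety worth noting is the identification of the trivially acting subgroup $\{I,-I\}$, which is what produces the factor $2^{n-1}$ rather than $2^n$; once this is observed, the rest is a direct orbit-counting argument.
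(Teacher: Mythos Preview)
Your proof is correct and follows essentially the same route as the paper: reduce to $\mathbb{F}_3$ via Theorem~\ref{thm7a}, then use orbit-stabilizer with the observation that every ternary linear code has $-I$ in its monomial automorphism group (the paper phrases this last fact as ``the automorphism group of a ternary LCD code has at least $2$ elements'' and attributes it to \cite{Araya19}, whereas you make the identification of $\{I,-I\}$ explicit). The summation over dimensions and the conclusion are identical.
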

\begin{proof}
   From \cite{Araya19}, the automorphism group of a ternary LCD code has at least $2$ elements. Hence, there are at least $\big\lceil\frac{\phi(n,m)}{2^{n-1}n!}\big\rceil$ monomial inequivalent $m$-dimensional ternary LCD codes of length $n$. The rest part of the proof holds immediately from Theorem \ref{thm7a}.
\end{proof}

The next two results of \cite{Alah24} are essential in our further studies.
\begin{theorem}\label{thm5}\cite{Alah24}
 For an $E_p$-linear code $C$, $rRes(C) \subseteq C$.
\end{theorem}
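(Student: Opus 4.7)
The plan is to prove the inclusion by taking an arbitrary element of $rRes(C)$, writing it as $r\boldsymbol{u}$ for some $\boldsymbol{u} \in Res(C)$, lifting $\boldsymbol{u}$ to a codeword of $C$, and then showing that left-multiplying that lift by $r$ produces precisely $r\boldsymbol{u}$ while landing in $C$.

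First, I would fix $\boldsymbol{u} = (u_1, \ldots, u_n) \in Res(C)$. By the definition of the residue code, there exists $\boldsymbol{x} = (x_1, \ldots, x_n) \in C$ with $\alpha(\boldsymbol{x}) = \boldsymbol{u}$. Using the $t$-adic decomposition, each coordinate can be written uniquely as $x_i = u_i r + v_i t$ for some $v_i \in \mathbb{F}_p$, where the coefficient of $r$ is forced to be $u_i$ because $\alpha(x_i) = u_i$.

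The key computational step is to show that $rt = 0$ in $E_p$. From $t = r + (p-1)s$ together with the defining relations $r^2 = r$, $rs = r$, and $pr = 0$, I get
\[
rt = r\bigl(r + (p-1)s\bigr) = r^2 + (p-1)rs = r + (p-1)r = pr = 0.
\]
Combined with $r^2 = r$, this gives $r x_i = r(u_i r + v_i t) = u_i r + v_i (rt) = u_i r$ for each $i$ (using the distributive action $eu = ue$ for $e \in E_p$, $u \in \mathbb{F}_p$). Hence $r\boldsymbol{x} = (u_1 r, \ldots, u_n r)$, which is exactly the element $r\boldsymbol{u}$ of $E_p^n$ under the natural interpretation.

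To conclude, since $C$ is a left $E_p$-submodule of $E_p^n$ and $\boldsymbol{x} \in C$, the product $r\boldsymbol{x}$ lies in $C$; but we have just identified $r\boldsymbol{x}$ with $r\boldsymbol{u}$, so $r\boldsymbol{u} \in C$. As $\boldsymbol{u}$ was arbitrary in $Res(C)$, this yields $rRes(C) \subseteq C$. I do not expect a real obstacle here: the only nontrivial piece is the identity $rt = 0$, and once that is noted the rest is a direct application of the left-module property of $C$.
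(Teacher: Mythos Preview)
Your proof is correct: the computation $rt = r^2 + (p-1)rs = r + (p-1)r = pr = 0$ is the only nontrivial ingredient, and once it is in hand, left-module closure of $C$ finishes the argument exactly as you describe. Note, however, that the paper does not give its own proof of this statement --- it is quoted from \cite{Alah24} without argument --- so there is no in-paper proof to compare your approach against. What you have written is the natural direct proof and would serve perfectly well as a self-contained justification.
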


\begin{theorem}\label{thm7}\cite{Alah24}
 For an $E_p$-linear code $C$, $C=rRes(C) \oplus tTor(C)$.
\end{theorem}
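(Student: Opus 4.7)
The plan is to lift the $t$-adic decomposition of individual ring elements, $e = ur + vt$ with $u,v \in \mathbb{F}_p$, to a coordinate-wise decomposition of codewords, and then identify the two ``components'' of a codeword with an element of $r\,Res(C)$ and an element of $t\,Tor(C)$, respectively. The only nontrivial ingredient that is not already set up for us in this manner is the fact that $r\,Res(C) \subseteq C$, which is exactly Theorem \ref{thm5}.

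For the direct-sum claim $r\,Res(C) \cap t\,Tor(C) = \{\boldsymbol 0\}$, the strategy is to note that $r\boldsymbol{u} = \boldsymbol{u}r$ has each coordinate of the form $u_i r + 0\cdot t$, while $t\boldsymbol{v} = \boldsymbol{v}t$ has each coordinate of the form $0\cdot r + v_i t$. Since $|E_p| = p^2$ and there are exactly $p^2$ pairs $(u,v) \in \mathbb{F}_p^2$, the existence of a $t$-adic representation for every element forces uniqueness, so coordinate-wise equality $u_i r = v_i t$ implies $u_i = v_i = 0$.

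For the equality $C = r\,Res(C) + t\,Tor(C)$, the containment $\supseteq$ is immediate: $r\,Res(C) \subseteq C$ by Theorem \ref{thm5}, and for every $\boldsymbol{v} \in Tor(C)$ we have $\boldsymbol{v}t = t\boldsymbol{v} \in C$ by the definition of $Tor(C)$, so closure under addition in the left $E_p$-module $C$ gives the inclusion. For the reverse direction, take $\boldsymbol{x} = (x_1,\ldots,x_n) \in C$ and write each $x_i = u_i r + v_i t$ via $t$-adic decomposition; setting $\boldsymbol{u} = (u_1,\ldots,u_n)$ and $\boldsymbol{v} = (v_1,\ldots,v_n)$, we get $\boldsymbol{x} = r\boldsymbol{u} + t\boldsymbol{v}$. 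Since $\alpha(x_i) = u_i$, we have $\boldsymbol{u} = \alpha(\boldsymbol{x}) \in Res(C)$, hence $r\boldsymbol{u} \in r\,Res(C) \subseteq C$ by Theorem \ref{thm5}. Consequently $t\boldsymbol{v} = \boldsymbol{x} - r\boldsymbol{u} \in C$, which by definition gives $\boldsymbol{v} \in Tor(C)$, finishing the argument.

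The main (and only) subtle point will be arguing that the $t$-adic decomposition is genuinely unique, so that the two pieces $r\boldsymbol{u}$ and $t\boldsymbol{v}$ are well-defined and the intersection argument goes through; this is where a cardinality argument (existence $+$ matching sizes $\Rightarrow$ bijection) is the cleanest route. Everything else reduces to invoking Theorem \ref{thm5} and the bare definitions of $Res(C)$ and $Tor(C)$.
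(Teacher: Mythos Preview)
The paper does not actually supply a proof of this theorem: it is quoted verbatim from \cite{Alah24} as a known result, with no argument given. So there is nothing in the paper to compare your proposal against.

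That said, your proof is correct and self-contained. The cardinality argument for uniqueness of the $t$-adic decomposition is the right way to pin down that $(u,v)\mapsto ur+vt$ is a bijection $\mathbb{F}_p^2\to E_p$, and from there the coordinate-wise lift is routine. The use of Theorem~\ref{thm5} to show $r\,\alpha(\boldsymbol{x})\in C$ and hence $t\boldsymbol{v}=\boldsymbol{x}-r\,\alpha(\boldsymbol{x})\in C$ is exactly the intended mechanism, and your directness argument via the unique $t$-adic form is clean. One cosmetic remark: when you write $t\boldsymbol{v}=\boldsymbol{v}t$, you are implicitly using that the $\mathbb{F}_p$-action commutes with ring elements (as set up in the paper), so the coordinate $v_i t$ is unambiguous; this is fine but worth a sentence if you are writing a formal proof.
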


The next result relates the minimum distance of a free $E_p$-linear code to the minimum distance of its residue code.
\begin{theorem}\label{thm 5E}
   Let $C$ be a non-zero free $E_p$-linear code. Then the minimum distances of $C$ and  $Res(C)$ are equal.
\end{theorem}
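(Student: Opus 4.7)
The plan is to prove equality by establishing the two inequalities $d(C)\le d(Res(C))$ and $d(C)\ge d(Res(C))$ separately, relying on the $t$-adic decomposition of elements of $E_p$ and on the freeness assumption $Tor(C)=Res(C)$.

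For $d(C)\le d(Res(C))$, I would pick a minimum weight codeword $\boldsymbol{x}\in Res(C)$, so that $wt(\boldsymbol{x})=d(Res(C))$. By Theorem \ref{thm5}, $r\boldsymbol{x}\in rRes(C)\subseteq C$. Writing $\boldsymbol{x}=(x_1,\ldots,x_n)$ with $x_i\in\mathbb{F}_p$, each coordinate of $r\boldsymbol{x}$ is $x_i r$, which is zero precisely when $x_i=0$ (because the $t$-adic form $x_i r + 0\cdot t$ is zero iff $x_i=0$). Hence $wt(r\boldsymbol{x})=wt(\boldsymbol{x})$, giving a codeword of $C$ of weight $d(Res(C))$.

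For the reverse inequality, take any nonzero $\boldsymbol{c}=(c_1,\ldots,c_n)\in C$ with $c_i=u_i r+v_i t$ in its unique $t$-adic decomposition. The key observation is that $c_i=0$ iff $u_i=v_i=0$, so $wt(\boldsymbol{c})$ equals the number of indices $i$ with $(u_i,v_i)\ne(0,0)$. Split into two cases. If $\alpha(\boldsymbol{c})=(u_1,\ldots,u_n)\ne\boldsymbol{0}$, then $\alpha(\boldsymbol{c})\in Res(C)$ is nonzero, so $wt(\boldsymbol{c})\ge wt(\alpha(\boldsymbol{c}))\ge d(Res(C))$. If $\alpha(\boldsymbol{c})=\boldsymbol{0}$, then $c_i=v_i t$ for all $i$, so $\boldsymbol{c}=t\boldsymbol{v}$ with $\boldsymbol{v}=(v_1,\ldots,v_n)\ne\boldsymbol{0}$; by definition $\boldsymbol{v}\in Tor(C)$. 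Here is where freeness enters: since $C$ is free, $Tor(C)=Res(C)$, so $\boldsymbol{v}\in Res(C)$ and $wt(\boldsymbol{v})\ge d(Res(C))$. Again using that $v_i t\ne 0$ whenever $v_i\ne 0$, we get $wt(\boldsymbol{c})=wt(\boldsymbol{v})\ge d(Res(C))$.

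Combining both bounds yields $d(C)=d(Res(C))$. The only delicate point is the second case in the lower bound argument, where an arbitrary minimum weight codeword of $C$ might lie entirely in the ``torsion layer'' $tTor(C)$ of the decomposition in Theorem \ref{thm7}; the freeness hypothesis is exactly what ensures such a codeword still corresponds to an element of $Res(C)$ with weight at least $d(Res(C))$. Without freeness (i.e., if $m_2>0$), $Tor(C)$ could be strictly larger than $Res(C)$ and contain vectors of smaller weight, which would break the argument — this is consistent with the example preceding Theorem \ref{thm4E} where non-free codes behave differently.
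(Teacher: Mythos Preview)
Your proof is correct and follows essentially the same approach as the paper's: both establish $d(C)\le d(Res(C))$ by lifting a minimum-weight residue codeword via multiplication by $r$ (using Theorem~\ref{thm5}), and $d(C)\ge d(Res(C))$ by decomposing an arbitrary nonzero codeword via the $t$-adic/residue--torsion decomposition and invoking freeness in the ``pure torsion'' case. The only cosmetic difference is that the paper splits the second inequality into three cases using the decomposition of Theorem~\ref{thm7} (handling the mixed case by left-multiplying by $r$, which kills the $t$-part since $rt=0$), whereas you merge the mixed and pure-residue cases into one by directly comparing supports of $\boldsymbol{c}$ and $\alpha(\boldsymbol{c})$.
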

\begin{proof}
Let $d(C)=d_C$ and $d(Res(C))=d_R$. Since $d(Res(C))=d_R$, there must exist some non-zero $\boldsymbol{z_1} \in Res(C)$ with $wt(\boldsymbol{z_1})=d_R$. Since $rRes(C) \subseteq C$ by  Theorem \ref{thm5}, $r\boldsymbol{z_1}\in C$. Then, $wt(r\boldsymbol{z_1})=wt(\boldsymbol{z_1})=d_R$ implies that  $d_C \leq d_R$. Further, let $\boldsymbol{z} \in C$ with $wt(\boldsymbol{z})=d_C$. This shows that $\boldsymbol{z}=r\boldsymbol{z_1}+t\boldsymbol{z_2}$ for some $\boldsymbol{z_1}, \boldsymbol{z_2}  \in Res(C)$, by Theorem \ref{thm7}. Then, the following cases arise:
\begin{enumerate}

    \item When $\boldsymbol{z_1}\neq \boldsymbol{0}$ and $\boldsymbol{z_2}= \boldsymbol{0}$,  $wt(\boldsymbol{z})=wt(r\boldsymbol{z_1})=wt(\boldsymbol{z_1})$.
    \item When $\boldsymbol{z_1}=\boldsymbol{0}$ and $\boldsymbol{z_2}\neq \boldsymbol{0}$,  $wt(\boldsymbol{z})=wt(t\boldsymbol{z_2})=wt(\boldsymbol{z_2})$.
    \item When $\boldsymbol{z_1}, \boldsymbol{z_2}\neq \boldsymbol{0}$,  $wt(\boldsymbol{z}) \geq wt(r\boldsymbol{z})=wt(r\boldsymbol{z_1})=wt(\boldsymbol{z_1})$.
\end{enumerate}
From the above three cases, we conclude that $d_C=wt(\boldsymbol{z})\geq d_R$, as $\boldsymbol{z_1}, \boldsymbol{z_2} \in Res(C)$ and $wt(\boldsymbol{z_1}), wt(\boldsymbol{z_2})\geq d_R$. Thus, $d_C=d_R$.

\end{proof}

Next, we give an example which shows that Theorem \ref{thm 5E} may not be true for non-free $E_p$-codes.
\begin{example}
 Let $C=\{ 000,rr0,ss0,tt0,0t0,rs0,sr0,t00\}$ be an $E_2$-linear code. Then, $Tor(C)=\{000,110,010,100\}$ and  $Res(C)=\{000,110\}$. The $E_2$-linear code $C$ is not free, as its torsion and residue codes are unequal. Further, $d(Res(C))=2$ and $d(C)=1$  imply that $ d(Res(C)) \neq d(C)$.
\end{example}

The next result enumerates monomial inequivalent $E_p$-LCD codes with a fixed code length and minimum distance.
\begin{theorem}\label{thm11}
    The number of monomial inequivalent LCD codes of length $n$ over $E_p$ with minimum distance $d$ is equal to the number of monomial inequivalent LCD codes of length $n$ over $\mathbb{F}_p$ with minimum distance $d$.
\end{theorem}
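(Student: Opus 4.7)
The plan is to build a bijection between monomial equivalence classes of $E_p$-LCD codes of length $n$ with minimum distance $d$ and the corresponding monomial equivalence classes of $\mathbb{F}_p$-LCD codes of length $n$ with minimum distance $d$. The bijection is the one already implicit in Theorem \ref{thm7a}, namely the map $C \mapsto Res(C)$, and my task reduces to showing that it preserves the minimum distance.

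First I would recall that by Theorem \ref{thm1e} every $E_p$-LCD code is free, and by Theorem \ref{thm5e} an $E_p$-linear code $C$ is LCD if and only if it is generated by a matrix of the form $rG$, where $G$ generates an $\mathbb{F}_p$-LCD code. In particular, $Res(C)$ is the $\mathbb{F}_p$-LCD code generated by $G$, so $C \mapsto Res(C)$ sends $E_p$-LCD codes of length $n$ to $\mathbb{F}_p$-LCD codes of length $n$, and conversely every $\mathbb{F}_p$-LCD code arises as $Res(C)$ for some $E_p$-LCD code $C$ (take $C = \langle rG \rangle_{E_p}$ and apply Theorem \ref{thm3}). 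Then Theorem \ref{thm4E}, applied to the free codes furnished by Theorem \ref{thm1e}, promotes this set-level correspondence to a bijection between monomial equivalence classes.

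Next I would invoke Theorem \ref{thm 5E}: since every $E_p$-LCD code $C$ is free, one has $d(C) = d(Res(C))$. Consequently the bijection $C \mapsto Res(C)$ restricts, for each fixed $d$, to a bijection between monomial equivalence classes of $E_p$-LCD codes of length $n$ and minimum distance $d$ and monomial equivalence classes of $\mathbb{F}_p$-LCD codes of length $n$ and minimum distance $d$. This yields the claimed equality of cardinalities.

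There is essentially no hard step here: all the substantive work has already been done in Theorems \ref{thm1e}, \ref{thm5e}, \ref{thm4E}, \ref{thm7a}, and \ref{thm 5E}. The only thing to be careful about is the direction that starts from an $\mathbb{F}_p$-LCD code: one must point out explicitly that $C = \langle rG \rangle_{E_p}$ is free with $Res(C) = \langle G \rangle_{\mathbb{F}_p}$, so that Theorem \ref{thm 5E} is genuinely applicable and the minimum distance is preserved when traversing the bijection in either direction.
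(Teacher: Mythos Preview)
Your argument is correct and mirrors the paper's own proof: the paper likewise observes that an $E_p$-LCD code has generator matrix $rG$ with $G$ generating an $\mathbb{F}_p$-LCD code (Theorem~\ref{thm4}), invokes Theorem~\ref{thm 5E} to equate $d(C)$ with $d(Res(C))$, and then appeals to Theorem~\ref{thm7a}. You are simply a bit more explicit about surjectivity and the role of Theorem~\ref{thm4E}, but the substance is identical.
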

\begin{proof}
By Theorem \ref{thm4}, we know that the generator matrix of an $E_p$-LCD code $C$ is of the form $rG$ where $G$ generates an $\mathbb{F}_p$-LCD code $D$. In addition, by Theorem \ref{thm 5E}, the minimum distances of the $E_p$-LCD code $C$ and $Res(C)$ are always equal. Moreover, $Res(C)=D$ is an LCD code over $\mathbb{F}_p$. Then the proof holds immediately from Theorem \ref{thm7a}.
\end{proof}

Our next purpose is to enumerate monomial inequivalent LCD codes over $E_2$ and $E_3$ for a fixed length and minimum distance. We use Propositions $4$, $5$ and Tables $4$, $6$ from \cite{Araya19} along with Theorem \ref{thm11} to calculate the numbers of monomial inequivalent LCD codes of length $n$ over $E_2$ and $E_3$ with a minimum distance $d$ and list them in Tables \ref{Tab3} and \ref{Tab4}, respectively. We denote by $N_d$ the number of monomial inequivalent LCD codes of length $n$ and minimum distance $d$. These numbers are given over $E_2$ up to length $13$ and over $E_3$ up to length $10.$
\begin{table}[ht]
{
\centering
\caption{\label{Tab3} No. of monomial inequivalent LCD codes over $E_2$ with a fixed minimum distance.}
\begin{tabular}{|c|ccccccccccccc|c|}
\hline
 $n$ & $N_1$ & $N_2$ & $N_3$ & $N_4$ & $N_5$ & $N_6$ & $N_7$ & $N_8$ & $N_9$ & $N_{10}$ & $N_{11}$ & $N_{12}$ & $N_{13}$ \\

\hline
 $1$ & $1$ &&&&&&&&&&&& \\

  $2$ & $2$ &-&&&&&&&&&&&\\

   $3$ & $3$ & $1$& $1$ &&&&&&&&&& \\

   $4$ & $6$ & $2$ & $1$ &-&&&&&&&&& \\

   $5$ & $10$ & $5$ & $1$ &- & $1$ &&&&&&&&\\

   $6$ & $18$ & $11$ & $3$ & -& $1$ &-&&&&&&& \\

   $7$ & $34$ & $24$ & $4$ & $1$ & $1$ &- & $1$ &&&&&&\\
   $8$ & $66$ & $57$ & $9$ & $2$ & $2$ &- & $1$ &- &&&&& \\
 $9$ & $138$ & $140$ & $23$ & $5$ & $2$ & $1$ & $1$ &-  & $1$ &&&& \\
 $10$ & $312$ & $393$ & $61$ & $13$ & $6$ & $2$ & $1$ &-  & $1$ &- &&& \\
 $11$ & $790$ & $1199$ & $185$ & $41$ & $11$ & $4$ & $1$ &-  & $1$ &- & $1$ && \\
 $12$ & $2234$ & $4381$ & $726$ & $152$ & $28$ & $7$ & $3$ &-  & $1$ &- & $1$ &- &\\
 $13$ & $7534$ & $18717$ & $3558$ & $708$ & $79$ & $16$ & $3$ & $1$ & $1$ &-  & $1$ &- & $1$  \\
\hline

\end{tabular}

}
\end{table}

\begin{table}[ht]
{
\centering
\caption{\label{Tab4} No. of monomial inequivalent LCD codes over $E_3$ with a fixed minimum distance.}
\begin{tabular}{|c|cccccccccc|c|}
\hline
 $n$ & $N_1$ & $N_2$ & $N_3$ & $N_4$ & $N_5$ & $N_6$ & $N_7$ & $N_8$ & $N_9$ & $N_{10}$  \\

\hline
 $1$ & $1$ &&&&&&&&& \\

  $2$ & $2$ & $1$ &&&&&&&&\\

   $3$ & $4$ & $1$& - &&&&&&&\\

   $4$ & $6$ & $4$ & - &$1$ &&&&&& \\

   $5$ & $12$ & $8$ & $1$ &$1$ & $1$ &&&&&\\

   $6$ & $24$ & $17$ & $4$ & $2$& $1$ &-&&&& \\

   $7$ & $49$ & $47$ & $12$ & $5$ & $1$ &- & $1$ &&&\\
   $8$ & $116$ & $149$ & $43$ & $17$ & $3$ &- & $1$ &$1$ && \\
 $9$ & $331$ & $539$ & $191$ & $63$ & $8$ & $1$ & $1$   & $1$ &- & \\
 $10$ & $1136$ & $2732$ & $1266$ & $410$ & $37$ &  $4$ & $2$ & $1$ &-  & $1$  \\

\hline

\end{tabular}

}
\end{table}

    \subsection{MDS and AMDS LCD codes over $E_p$}
For an $(n,M,d)$-code, the Singleton bound is already defined in \cite{Singleton}. Here, we define the same bound over the ring $E_p$.
\begin{definition}[Singleton bound]
    For an $(n,M,d)$-code $C$ over $E_p$, the Singleton bound states that $M \leq |E_p|^{n-d+1}$.
\end{definition}
%
\begin{definition}[MDS and AMDS codes]
    For a linear $(n,M,d)$-code $C$ over $E_p$, if $M = |E_p|^{n-d+1}$ (resp. $M = |E_p|^{n-d}$), it is known as an MDS (resp. AMDS) code.
    \end{definition}

The next result is crucial for our study, as it characterizes free MDS and AMDS codes over $E_p$.
\begin{theorem}\label{thm13}
    A free linear code $C$ of length $n$ over $E_p$ with minimum distance $d$ is an MDS (resp. AMDS) code over $E_p$ if and only if its residue code is an MDS (resp. AMDS) code over $\mathbb{F}_p.$
\end{theorem}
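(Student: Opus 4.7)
The plan is to reduce the MDS/AMDS conditions for the $E_p$-code $C$ to statements about the $\mathbb{F}_p$-dimension of $Res(C)$, and then to observe that the corresponding conditions for $Res(C)$ over $\mathbb{F}_p$ reduce to exactly the same numerical equality.

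First, I would compute $|C|$ in terms of $k := \dim_{\mathbb{F}_p} Res(C)$. By Theorem \ref{thm7}, $C = rRes(C) \oplus tTor(C)$. The maps $\boldsymbol{v}\mapsto r\boldsymbol{v}$ and $\boldsymbol{v}\mapsto t\boldsymbol{v}$ embed $\mathbb{F}_p^n$ injectively into $E_p^n$, because for $u \in \mathbb{F}_p$ with $0<u<p$ both $ur$ and $ut$ are nonzero in $E_p$ (their $t$-adic expansions are distinct). Hence $|rRes(C)|=|Res(C)|$ and $|tTor(C)|=|Tor(C)|$. Since $C$ is free, $Res(C)=Tor(C)$ by the characterization of freeness stated after the definition, so $|C| = |Res(C)|\cdot|Tor(C)| = p^{2k}$. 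Next, Theorem \ref{thm 5E} ensures $d(C)=d(Res(C))$ for any non-zero free $E_p$-linear code; call this common value $d$.

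To close, I compare with the Singleton bound over $E_p$: since $|E_p|^{n-d+1} = p^{2(n-d+1)}$, the code $C$ is MDS iff $p^{2k} = p^{2(n-d+1)}$, i.e.\ iff $k = n-d+1$. On the $\mathbb{F}_p$ side, $Res(C)$ has length $n$, dimension $k$, and minimum distance $d$, hence is MDS iff $p^k = p^{n-d+1}$, i.e.\ iff $k=n-d+1$. The two conditions coincide, giving the MDS equivalence. The AMDS case is handled identically with $n-d+1$ replaced by $n-d$ throughout. I do not anticipate a genuine obstacle: the only care needed is to confirm that the sum in Theorem \ref{thm7} is internal direct and that multiplication by $r$ (resp.\ $t$) does not collapse $Res(C)$ (resp.\ $Tor(C)$), both of which follow at once from the $t$-adic structure of $E_p$.
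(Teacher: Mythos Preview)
Your proposal is correct and follows essentially the same route as the paper: compute $|C|=p^{2k}$ from freeness (via $Res(C)=Tor(C)$ and the decomposition $C=rRes(C)\oplus tTor(C)$), invoke Theorem~\ref{thm 5E} to equate the minimum distances, and reduce both MDS conditions to the single numerical identity $k=n-d+1$ (resp.\ $k=n-d$). Your version is slightly more explicit in justifying the cardinality formula, but the argument is the same.
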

\begin{proof}
  The torsion and residue codes of a free $E_p$-linear code $C$ are always equal. Hence, $$|C|=|Res(C)||Tor(C)=|Res(C)|^2=(p^2)^{m_1},$$
  and $$|E_p|^{n-d+1}=(p^2)^{n-d+1}.$$
  These imply that  $|C| = |E_p|^{n-d+1}$ if and only if $m_1=n-d+1$. Moreover, by Theorem \ref{thm 5E}, $d(C)=d(Res(C))$. Therefore, the free $E_p$-linear code $C$ is MDS if and only if its residue code is MDS. Similar arguments work for the AMDS part.
\end{proof}

The following corollary is very important, as it provides a necessary and sufficient condition for an LCD code $C$ over $E_p$ to be MDS or AMDS.

\begin{corollary}\label{cor1}
   An $E_p$-LCD code $C$ is MDS (resp. AMDS)  if and only if its residue code is an LCD MDS (resp. AMDS) code over $\mathbb{F}_p$.
\end{corollary}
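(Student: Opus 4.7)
The plan is to combine three previously established results. First, I would invoke Theorem \ref{thm1e} to observe that any $E_p$-LCD code $C$ is automatically free, which places $C$ in the scope of Theorem \ref{thm13}.

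Next, I would identify $Res(C)$ explicitly. By Theorem \ref{thm4}, $C$ admits a generator matrix of the form $rG$ for some generator matrix $G$ of an $\mathbb{F}_p$-LCD code $D$. Applying the reduction map $\alpha$ coordinate-wise to the rows of $rG$ returns the rows of $G$, so $Res(C)=D$. In particular, $Res(C)$ is itself an LCD code over $\mathbb{F}_p$, with no further hypothesis on the minimum distance needed for this part.

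For the forward direction, I would assume $C$ is MDS (resp. AMDS) over $E_p$. Since $C$ is free, Theorem \ref{thm13} immediately yields that $Res(C)$ is MDS (resp. AMDS) over $\mathbb{F}_p$, and combined with the LCD property of $Res(C)$ established in the previous step, $Res(C)$ is an LCD MDS (resp. AMDS) code. Conversely, if $Res(C)$ is LCD MDS (resp. AMDS) over $\mathbb{F}_p$, the MDS (resp. AMDS) half of this hypothesis together with the freeness of $C$ again allows me to apply Theorem \ref{thm13} in the opposite direction to conclude that $C$ is MDS (resp. AMDS) over $E_p$.

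I do not expect a genuine obstacle, as the corollary is essentially a clean bookkeeping combination of Theorems \ref{thm1e}, \ref{thm4}, and \ref{thm13}. The only subtlety is ensuring that the LCD hypothesis on $C$ is used correctly on both sides: on the forward side it is needed to guarantee freeness (so that Theorem \ref{thm13} is applicable) and to transfer the LCD property to $Res(C)$; on the reverse side it is again what ensures freeness of $C$, which is precisely what lets me lift the MDS/AMDS property from $Res(C)$ back up to $C$.
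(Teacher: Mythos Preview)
Your proposal is correct and follows essentially the same approach as the paper, which simply cites Theorem~\ref{thm1e} for freeness and then Theorem~\ref{thm13} for the MDS/AMDS transfer. Your explicit invocation of Theorem~\ref{thm4} to justify that $Res(C)$ is itself LCD is a welcome clarification that the paper's terse proof leaves implicit.
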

\begin{proof}
 By Theorem \ref{thm1e}, an $E_p$-LCD code  $C$  is always free. Then the proof holds immediately from Theorem \ref{thm13}.
\end{proof}

The following examples show that Theorem \ref{thm13} may not hold for non-free linear codes over $E_p$.
\begin{example}
    Let $C$ be an $E_2$-linear code with generator matrix
    $$G_{E_2}=\begin{pmatrix}
        r &~ r & ~r &~ r\\
        t &~ 0 & ~0 & ~t\\
        0 & ~t & ~0 & ~t\\
        0 & ~0 &~ t & ~t
    \end{pmatrix}.$$
    Then, its residue and torsion codes have respective generator matrices
    $$G_1=\begin{pmatrix}
        1 &~ 1&~ 1&~ 1
    \end{pmatrix} ~ \text{and}~ G_2=\begin{pmatrix}
        1 &~ 0 &~ 0 &~ 1 \\
        0 &~ 1 &~ 0 &~ 1 \\
        0 &~ 0 &~ 1 &~ 1
    \end{pmatrix}.$$
    This shows that $C$ is not free. We can easily see that the residue code is a $(4,2,4)$ binary code. Therefore, the residue code is an MDS code. But the $E_2$-linear code $C$ is a $(4,16,2)$-code which is not MDS as $|C|=2^4$ and $|E_2|^{n-d+1}=2^6$.
\end{example}
\begin{example}
    Let $C=\{000,r0r,s0s,t0t,0t0,rtr,sts,ttt\}$ be an $E_2$-linear code of length $3$. Then,  $Res(C)=\{000, 101\}$ and $Tor(C)=\{000,101,010,111\}$. This linear code $C$ over $E_2$ is not free as its torsion and residue codes are not equal. Further, we see that $Res(C)$ is a $(3,2,2)$-code which is an AMDS code. But $C$ is a $(3,8,1)$-code over $E_2$ which is not AMDS, as $|C|=2^3$ and $|E_2|^{n-d}=2^4$.
\end{example}

The next result enumerates monomial inequivalent MDS and AMDS LCD codes over $E_p$.
\begin{theorem}
    The number of monomial inequivalent MDS (resp. AMDS) LCD codes over $E_p$ and the number of monomial inequivalent MDS (resp. AMDS) LCD codes over $\mathbb{F}_p$ having the same lengths and minimum distances are equal.
\end{theorem}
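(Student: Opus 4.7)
The plan is to obtain this as an almost immediate corollary of Theorem~\ref{thm11} combined with Corollary~\ref{cor1}. The key observation is that all the heavy lifting has already been done: Theorem~\ref{thm11} establishes a bijection between monomial equivalence classes of LCD codes of length $n$ and minimum distance $d$ over $E_p$ and those over $\mathbb{F}_p$, and Corollary~\ref{cor1} identifies exactly which classes on each side are MDS (resp.\ AMDS). I only need to check that the bijection restricts correctly to the MDS and AMDS subclasses.

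First I would recall explicitly the bijection underlying Theorem~\ref{thm11}: an $E_p$-LCD code $C$ is free (Theorem~\ref{thm1e}) and has a generator matrix of the form $rG$ where $G$ generates an $\mathbb{F}_p$-LCD code $D = Res(C)$ (Theorem~\ref{thm5e}); conversely, every $\mathbb{F}_p$-LCD code $D$ arises as the residue code of the $E_p$-LCD code $\langle rG\rangle_{E_p}$. Two free $E_p$-codes are monomial equivalent over $E_p$ if and only if their residue codes are monomial equivalent over $\mathbb{F}_p$ (Theorem~\ref{thm4E}), and the minimum distances coincide for free codes (Theorem~\ref{thm 5E}). Hence the map $[C]\mapsto [Res(C)]$ is a well-defined bijection between monomial equivalence classes of $E_p$-LCD codes and $\mathbb{F}_p$-LCD codes of any fixed length $n$ and fixed minimum distance $d$.

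Next I would restrict this bijection to the MDS (respectively AMDS) subclasses. By Corollary~\ref{cor1}, an $E_p$-LCD code $C$ is MDS (resp.\ AMDS) if and only if $Res(C)$ is an MDS (resp.\ AMDS) LCD code over $\mathbb{F}_p$. Therefore the bijection $[C]\mapsto[Res(C)]$ sends MDS classes to MDS classes and AMDS classes to AMDS classes, and similarly on the inverse side the lift $D\mapsto \langle rG\rangle_{E_p}$ preserves the MDS/AMDS property by the same corollary. Counting the two sides of this restricted bijection gives the claimed equality.

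There is no genuine obstacle here; the only point requiring a line of care is that ``same length and minimum distance'' on both sides of the bijection really does pin down the MDS or AMDS status, which follows because for a free $E_p$-LCD code $|C| = |Res(C)|^2$, so $|C| = |E_p|^{n-d+1}$ is equivalent to $|Res(C)| = p^{n-d+1}$ (and likewise for AMDS). I would state this compatibility of sizes in a single sentence and then conclude.
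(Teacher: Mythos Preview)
Your proposal is correct and follows exactly the paper's approach: the paper's own proof is the single line ``The proof is direct by using Theorems~\ref{thm11} and~\ref{thm13},'' and your argument is simply a fleshed-out version of that, with Corollary~\ref{cor1} playing the role of Theorem~\ref{thm13} specialized to LCD codes. The extra care you take in verifying that the bijection restricts to the MDS/AMDS subclasses is a welcome expansion but not a departure from the intended method.
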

\begin{proof}
   The proof follows immediately from Theorems \ref{thm11} and \ref{thm13}.
\end{proof}

Our next purpose is to classify MDS and AMDS LCD codes over $E_2$ and $E_3$ for some specific lengths. We rely on \cite{Araya19} for inequivalent binary and ternary LCD codes and use MAGMA  \cite{Magma} to find MDS and AMDS binary and ternary LCD codes. Then, we use Theorems \ref{thm2}, \ref{thm1e}, \ref{thm4E}, \ref{thm 5E}, and Corollary \ref{cor1} to classify monomial inequivalent MDS and AMDS LCD codes over $E_2$ and $E_3$ for lengths up to $6$ and list them in Tables \ref{Tab5} and \ref{Tab6}, respectively.

\begin{longtable}{|c|p{6cm}|c|c|}

 \caption{\label{Tab5} Monomial inequivalent MDS and AMDS LCD codes over $E_2$.}\\

\hline
 $n$ &~~~~~~~~ Generator Matrix & Minimum  & Remark \\ && Distance &  \\

 \hline

 $1$ & \hspace{2.3cm}$\begin{pmatrix}
     r
 \end{pmatrix}$ & $1$ & MDS  \\[5pt]
\hline

  & \hspace{2.2cm}$\begin{pmatrix}
     r& 0
 \end{pmatrix}$  & 1 & AMDS\\ [5pt]
 $2$ &\hspace{2.25cm}$rI_2$ & 1 & MDS \\ [5pt]
 \hline

 & \hspace{2cm} $\begin{pmatrix}
     r & r & r
 \end{pmatrix}$ & 3 & MDS \\ [5pt]
 $3$ & \hspace{2cm} $\begin{pmatrix}
     r & r & 0 \\
     r & 0 & r
 \end{pmatrix}$ & 2 & MDS \\[10pt]
  & \hspace{2cm} $\begin{pmatrix}
     0 & r & 0 \\
     0 & 0 & r
 \end{pmatrix}$ & 1 & AMDS \\[5pt]
 & \hspace{2.3cm} $rI_3$  & 1 & MDS \\ [5pt]
 \hline

 & \hspace{2cm}$\begin{pmatrix}
     r & r & r & 0
 \end{pmatrix}$ & $3$ & AMDS \\ [5pt]

 & \hspace{2cm}$\begin{pmatrix}
     r & 0 & r & r \\
     0 & r & r & r
 \end{pmatrix}$ & 2 & AMDS \\ [9pt]

& \hspace{2cm}$\begin{pmatrix}
      r & 0 & 0 & r \\
     0 & r & 0 & r
 \end{pmatrix}$ & $2$ & AMDS \\ [9pt]
 $4$ &\hspace{1.82cm} $\begin{pmatrix}
     0 & r & 0 & 0 \\
     0 & 0 & r & 0  \\
     0 & 0 & 0 & r
 \end{pmatrix}$ & 1 & AMDS\\ [7pt]
&\hspace{1.82cm} $\begin{pmatrix}
     r & r & 0 & 0 \\
     r & 0 & r & 0  \\
     0 & 0 & 0 & r
 \end{pmatrix}$ & 1 & AMDS  \\ [15pt]
& \hspace{2.4cm} $rI_4$ & 1 & MDS \\ [5pt]
\hline

&\hspace{2cm} $\begin{pmatrix}
    r & r & r & r &r
\end{pmatrix}$ & 5 & MDS \\ [5pt]

& \hspace{2cm}$\begin{pmatrix}
    0 & r & r & 0 & 0 \\
    0 & r & 0 & r & 0 \\
    r & r & 0 & 0 & r
\end{pmatrix}$ & 2 & AMDS \\ [15pt]

$5$ &\hspace{1.85cm} $\begin{pmatrix}
    0 & r & 0 & 0 & 0 \\
    0 & 0 & r & 0 & 0 \\
    0 & 0 & 0 & r & 0 \\
    0 & 0 & 0 & 0 & r
\end{pmatrix}$ & 1 & AMDS \\ [20pt]

& \hspace{2cm}$\begin{pmatrix}
    r & r & 0 & 0 & 0 \\
    r & 0 & r & 0 & 0 \\
    r & 0 & 0 & r & 0 \\
    r & 0 & 0 & 0 & r
\end{pmatrix}$ & 2 & MDS \\ [20pt]
& \hspace{2cm}$\begin{pmatrix}
    r & r & 0 & 0 & 0 \\
    r & 0 & r & 0 & 0 \\
    0 & 0 & 0 & r & 0 \\
    0 & 0 & 0 & 0 & r
\end{pmatrix}$ & 1 & AMDS \\ [20pt]

& \hspace{2.6cm} $rI_5$ & 1 & MDS \\ [5pt]
 \hline

 & \hspace{2.1cm}$\begin{pmatrix}
     r & 0 & r & r & r & r
 \end{pmatrix}$ & 5 & AMDS \\ [5pt]

 & \hspace{2cm}$\begin{pmatrix}
    0 & r & r & 0 & 0 & 0 \\
    0 & r & 0 & r & 0 & 0 \\
    0 & r & 0 & 0 & r & 0 \\
    r & r & 0 & 0 & 0 & r
\end{pmatrix}$ & 2 & AMDS \\ [20pt]
& \hspace{2cm}$\begin{pmatrix}
    r & r & r & 0 & 0 & 0 \\
    r & 0 & 0 & r & 0 & 0 \\
    0 & r & 0 & 0 & r & 0 \\
    0 & r & 0 & 0 & 0 & r
\end{pmatrix}$ & 2 & AMDS\\ [20pt]

& \hspace{2cm}$\begin{pmatrix}
    r & 0 & r & 0 & 0 & 0 \\
    r & 0 & 0 & r & 0 & 0 \\
    0 & r & 0 & 0 & r & 0 \\
    0 & r & 0 & 0 & 0 & r
\end{pmatrix}$ & 2 & AMDS \\ [20pt]
 $6$  & \hspace{2cm}$\begin{pmatrix}
    0 & r & r & 0 & 0 & 0 \\
    0 & r & 0 & r & 0 & 0 \\
    0 & r & 0 & 0 & r & 0 \\
    0 & r & 0 & 0 & 0 & r
\end{pmatrix}$ & 2 & AMDS  \\ [20pt]
& \hspace{2cm}$\begin{pmatrix}
    0 & r & 0 & 0 & 0 & 0 \\
    r & 0 & r & 0 & 0 & 0 \\
    r & 0 & 0 & r & 0 & 0 \\
    r & 0 & 0 & 0 & r & 0 \\
    r  & 0 & 0 & 0 & 0 & r
\end{pmatrix}$ & 1 & AMDS \\ [25pt]
&\hspace{1.85cm} $\begin{pmatrix}
    0 & r & 0 & 0 & 0 & 0 \\
    0 & 0 & r & 0 & 0 & 0 \\
    0 & 0 & 0 & r & 0 & 0 \\
    0 & 0 & 0 & 0 & r & 0 \\
    0  & 0 & 0 & 0 & 0 & r
\end{pmatrix}$ & 1 & AMDS \\ [25pt]
& \hspace{2cm}$\begin{pmatrix}
    0 & r & 0 & 0 & 0 & 0 \\
    0 & 0 & r & 0 & 0 & 0 \\
    0 & 0 & 0 & r & 0 & 0 \\
    r & 0 & 0 & 0 & r & 0 \\
    r  & 0 & 0 & 0 & 0 & r
\end{pmatrix}$ & 1 & AMDS \\ [5pt]
& \hspace{2.7cm} $rI_6$ & 1 & MDS \\ [3pt]

\hline

\end{longtable}

\begin{longtable}{|c|p{6cm}|c|c|}
\caption{\label{Tab6} Monomial inequivalent MDS and AMDS LCD codes over $E_3$.}\\
\hline
 $n$ &~~~~~~~~ Generator Matrix & Minimum  & Remark \\ && Distance &  \\

 \hline

 $1$ & \hspace{2.25cm}$\begin{pmatrix}
     r
 \end{pmatrix}$ & $1$ & MDS  \\[5pt]
\hline

  & \hspace{2.15cm}$\begin{pmatrix}
     r& r
 \end{pmatrix}$  & 2 & MDS\\ [5pt]
 $2$& \hspace{2.15cm}$\begin{pmatrix}
      r & 0
 \end{pmatrix}$ & 1 & AMDS \\
 &\hspace{2.25cm}$rI_2$ & 1 & MDS \\ [5pt]
 \hline

 & \hspace{2.1cm} $\begin{pmatrix}
     r & r & 0
 \end{pmatrix}$ & 2 & AMDS \\ [5pt]
 $3$ & \hspace{2cm} $\begin{pmatrix}
     2r & r & 0
 \end{pmatrix}$ & 1 & AMDS \\ [5pt]
 & \hspace{2cm} $\begin{pmatrix}
     0 & r & 0 \\
     0 & 0 & r
 \end{pmatrix}$ & 1 & AMDS \\[5pt]
 & \hspace{2.3cm} $rI_3$  & 1 & MDS \\ [5pt]
 \hline

 & \hspace{2cm}$\begin{pmatrix}
     r & 2r & 2r & r
 \end{pmatrix}$ & $4$ & MDS \\ [5pt]

 & \hspace{2cm}$\begin{pmatrix}
     r & 0 & r & 0 \\
     0 & r & 2r & r
 \end{pmatrix}$ & 2 & AMDS  \\ [9pt]
 & \hspace{2cm}$\begin{pmatrix}
      r & 0 & r & 0 \\
     0 & r & 0 & r
 \end{pmatrix}$ & $2$ & AMDS \\ [9pt]
 $4$ & \hspace{2cm} $\begin{pmatrix}
     0 & r & 0 & 0 \\
     r & 0 & r & 0  \\
     0 & 0 & 0 & r
 \end{pmatrix}$ & 1 & AMDS  \\ [7pt]

 &\hspace{2cm} $\begin{pmatrix}
     r & r & 0 & 0 \\
     r & 0 & r & 0  \\
     2r & 0 & 0 & r
 \end{pmatrix}$ & 2 & MDS \\ [15pt]
 & \hspace{2.15cm}$\begin{pmatrix}
     0 & r & 0 & 0 \\
     0& 0 & r & 0  \\
     0 & 0 & 0 & r
 \end{pmatrix}$ & 1 & AMDS \\ [15pt]
 & \hspace{2.4cm} $rI_4$ & 1 & MDS \\ [5pt]
\hline

&\hspace{2cm} $\begin{pmatrix}
    r & r & r & r &r
\end{pmatrix}$ & 5 & MDS \\ [5pt]
& \hspace{2cm} $\begin{pmatrix}
    r & r & 2r & 0 &r
\end{pmatrix}$ & 4 & AMDS \\ [5pt]
  & \hspace{2cm}$\begin{pmatrix}
    r & 0 & 2r & 0 & r \\
    0 & r & 2r & 2r & 0
\end{pmatrix}$ & 3 & AMDS \\ [10pt]

 & \hspace{1.9cm} $\begin{pmatrix}
    r & 0 & r & 0 & 0 \\
    r & 0 & 0 & r & 0 \\
    2r & 0 & 0 & 0 & r
\end{pmatrix}$ & 2 & AMDS\\ [15pt]

$5$ & \hspace{2cm}$\begin{pmatrix}
    r & r & r & 0 & 0 \\
    0 & r & 0 & r & 0 \\
    2r & 0 & 0 & 0 & r
\end{pmatrix}$ & 2 & AMDS \\ [15pt]
& \hspace{2cm}$\begin{pmatrix}
    0 & r & r & 0 & 0 \\
    r & r & 0 & r & 0 \\
    0 & 2r & 0 & 0 & r
\end{pmatrix}$ & 2 & AMDS \\ [15pt]

&\hspace{2cm}$\begin{pmatrix}
    2r & r & 0 & 0 & 0 \\
    2r & 0 & r & 0 & 0 \\
    2r & 0 & 0 & r & 0 \\
    2r & 0 & 0 & 0 & r
\end{pmatrix}$ & 2 & MDS \\ [7pt]
& \hspace{2cm}$\begin{pmatrix}
    2r & r & 0 & 0 & 0 \\
    r & 0 & r & 0 & 0 \\
    0 & 0 & 0 & r & 0 \\
    2r & 0 & 0 & 0 & r
\end{pmatrix}$ & 1 & AMDS \\ [20pt]
& \hspace{1.95cm} $\begin{pmatrix}
    0 & r & 0 & 0 & 0 \\
    0 & 0 & r & 0 & 0 \\
    0 & 0 & 0 & r & 0 \\
    0 & 0 & 0 & 0 & r
\end{pmatrix}$ & 1 & AMDS \\ [20pt]
& \hspace{2cm}$\begin{pmatrix}
    2r & r & 0 & 0 & 0 \\
    0 & 0 & r & 0 & 0 \\
    0 & 0 & 0 & r & 0 \\
    0 & 0 & 0 & 0 & r
\end{pmatrix}$ & 1 & AMDS  \\ [15pt]
 & \hspace{2.6cm} $rI_5$ & 1 & MDS \\ [5pt]
 \hline

 & \hspace{2cm}$\begin{pmatrix}
     r & r & 2r & 0 & 2r & r
 \end{pmatrix}$ & 5 & AMDS \\ [5pt]
 & \hspace{2cm}$\begin{pmatrix}
     r & 0 & r & r & r & r \\
     0 & r & 2r & 2r & r & r
 \end{pmatrix}$ & 4 & AMDS \\ [10pt]

 & \hspace{2cm}$\begin{pmatrix}
     r & 0 & 0 & 2r & 0 & r \\
     0 & r & 0 & 2r & 2r & 0 \\
     0 & 0 & r & 0 & 2r & r
 \end{pmatrix}$ & 3 & AMDS \\ [15pt]
& \hspace{2cm}$\begin{pmatrix}
     r & 0 & 0 & 2r & 0 & r \\
     0 & r & 0 & 2r & 2r & 0 \\
     0 & 0 & r & 2r & 2r & r
 \end{pmatrix}$ & 3 & AMDS  \\ [15pt]

& \hspace{2cm}$\begin{pmatrix}
    2r & 0 & r & 0 & 0 & 0 \\
    0 & 2r & 0 & r & 0 & 0 \\
    0 & r & 0 & 0 & r & 0 \\
    0 & 2r & 0 & 0 & 0 & r
\end{pmatrix}$ & 2 & AMDS \\ [20pt]
& \hspace{2cm}$\begin{pmatrix}
    2r & r & r & 0 & 0 & 0 \\
    2r & r & 0 & r & 0 & 0 \\
    2r & 2r & 0 & 0 & r & 0 \\
    2r & 2r & 0 & 0 & 0 & r
\end{pmatrix}$ & 2 & AMDS \\ [20pt]
$6$ & \hspace{2cm}$\begin{pmatrix}
    2r & 0 & r & 0 & 0 & 0 \\
    2r & 0 & 0 & r & 0 & 0 \\
    2r & 0 & 0 & 0 & r & 0 \\
    2r & 0 & 0 & 0 & 0 & r
\end{pmatrix}$ & 2 & AMDS  \\ [20pt]
& \hspace{2cm}$\begin{pmatrix}
    2r & r & r & 0 & 0 & 0 \\
    0 & r & 0 & r & 0 & 0 \\
    0 & 2r & 0 & 0 & r & 0 \\
    r & 2r & 0 & 0 & 0 & r
\end{pmatrix}$ & 2 & AMDS \\ [20pt]

& \hspace{2.1cm}$\begin{pmatrix}
    0 & r & 0 & 0 & 0 & 0 \\
    0 & 0 & r & 0 & 0 & 0 \\
    r & 0 & 0 & r & 0 & 0 \\
    0 & 0 & 0 & 0 & r & 0 \\
    0  & 0 & 0 & 0 & 0 & r
\end{pmatrix}$ & 1 & AMDS \\ [25pt]
&\hspace{1.9cm} $\begin{pmatrix}
    0 & r & 0 & 0 & 0 & 0 \\
    r & 0 & r & 0 & 0 & 0 \\
    2r & 0 & 0 & r & 0 & 0 \\
    2r & 0 & 0 & 0 & r & 0 \\
    0  & 0 & 0 & 0 & 0 & r
\end{pmatrix}$ & 1 & AMDS \\ [25pt]
& \hspace{1.9cm} $\begin{pmatrix}
    2r & r & 0 & 0 & 0 & 0 \\
    r & 0 & r & 0 & 0 & 0 \\
    0 & 0 & 0 & r & 0 & 0 \\
    r & 0 & 0 & 0 & r & 0 \\
    2r  & 0 & 0 & 0 & 0 & r
\end{pmatrix}$ & 1 & AMDS  \\ [25pt]

&\hspace{1.95cm} $\begin{pmatrix}
    0 & r & 0 & 0 & 0 & 0 \\
    0 & 0 & r & 0 & 0 & 0 \\
    0 & 0 & 0 & r & 0 & 0 \\
    0 & 0 & 0 & 0 & r & 0 \\
    0  & 0 & 0 & 0 & 0 & r
\end{pmatrix}$ & 1 & AMDS \\ [5pt]
& \hspace{2.6cm} $rI_6$ & 1 & MDS \\ [3pt]

\hline

\end{longtable}

\section{Self-dual codes}
This Section focuses on the study of left, right and two-sided self-dual codes over $E_p$. Firstly, we study left self-dual codes over $E_p$ and classify these codes which are MDS and AMDS over $E_2$ and $E_3$ for lengths up to $12$. Then, we study right self-dual codes over $E_p$ and prove the non-existence of right self-dual MDS codes. Also, we show that the right self-dual AMDS codes over $E_p$ exist only for the code length $2$. Finally, we study MDS and AMDS two-sided self-dual codes over $E_p$ and classify these codes over $E_2$ and $E_3$ for lengths up to $6$ and $4$, respectively.

\begin{definition}[Left and right self-dual codes]
    If an $E_p$-linear code $C$ satisfies $C=C^{\perp_L}$ (resp. $C=C^{\perp_R}$), it is known as a left (resp. right) self-dual code.
\end{definition}

\begin{definition}[Quasi-self-dual and self-dual codes]
 If an $E_p$-linear code $C$ of length $n$ satisfies $C\subseteq C^{\perp_L}\cap C^{\perp_R}$ and $|C|=p^n$, it is known as a quasi-self-dual (QSD) code. Further, if $C=C^{\perp_L}\cap C^{\perp_R}$, the code $C$  refers to a self-dual code.
\end{definition}

The next result gives a condition under which an $E_p$-linear code $C$ becomes left self-dual.
\begin{theorem}\label{thm25}
 An $E_p$-linear code $C$ is a left self-dual code over $E_p$ if and only if $C$ is free and its residue code is a self-dual code over $\mathbb{F}_p$.
\end{theorem}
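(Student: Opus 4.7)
The plan is to pin down the left dual $C^{\perp_L}$ by an explicit formula depending only on $Res(C)$, and then read the theorem off from that formula. The first ingredient is the multiplication table of the generators $r$ and $t = r + (p-1)s$: directly from the defining relations one checks
\[
r^2 = r, \qquad rt = 0, \qquad tr = t, \qquad tt = 0.
\]
This is all the ring data I will need.

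The computational heart of the argument is the following identity. For any $\boldsymbol{z} = r\boldsymbol{u} + t\boldsymbol{v}$ and $\boldsymbol{c} = r\boldsymbol{u}' + t\boldsymbol{v}'$ in $E_p^n$, with $\boldsymbol{u},\boldsymbol{v},\boldsymbol{u}',\boldsymbol{v}' \in \mathbb{F}_p^n$, expanding coordinate-wise via the table above gives
\[
\langle \boldsymbol{z},\, \boldsymbol{c} \rangle \;=\; \langle \boldsymbol{u}, \boldsymbol{u}' \rangle_{\mathbb{F}_p}\, r \;+\; \langle \boldsymbol{v}, \boldsymbol{u}' \rangle_{\mathbb{F}_p}\, t.
\]
The striking feature, and the reason everything works, is that the $\boldsymbol{v}'$-part of $\boldsymbol{c}$ has vanished: the relations $rt = tt = 0$ kill every $t$-factor coming from the right operand. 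By Theorem~\ref{thm7} every codeword of $C$ has the form $r\boldsymbol{u}' + t\boldsymbol{v}'$ with $\boldsymbol{u}' \in Res(C)$ and $\boldsymbol{v}' \in Tor(C)$, but only $\boldsymbol{u}'$ enters the pairing, so $\boldsymbol{z} \in C^{\perp_L}$ iff $\boldsymbol{u}, \boldsymbol{v} \in Res(C)^\perp$. Uniqueness of the $t$-adic decomposition then yields
\[
C^{\perp_L} \;=\; r\,Res(C)^\perp \,\oplus\, t\,Res(C)^\perp,
\]
a code for which $Res(C^{\perp_L}) = Tor(C^{\perp_L}) = Res(C)^\perp$; in particular $C^{\perp_L}$ is always free, of cardinality $p^{2(n-\dim Res(C))}$.

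Both directions of the equivalence now fall out at once. If $C = C^{\perp_L}$, then $C$ is free because $C^{\perp_L}$ always is, and $Res(C) = Res(C^{\perp_L}) = Res(C)^\perp$, so $Res(C)$ is self-dual over $\mathbb{F}_p$. Conversely, if $C$ is free then $Tor(C) = Res(C)$, hence Theorem~\ref{thm7} gives $C = r\,Res(C) \oplus t\,Res(C)$; feeding the assumption $Res(C) = Res(C)^\perp$ into the displayed formula for $C^{\perp_L}$ immediately yields $C = C^{\perp_L}$.

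The only step requiring genuine care is the inner-product identity, where the noncommutativity and the absence of a unit element in $E_p$ must be handled simultaneously. The payoff is clean: the annihilations $rt = tt = 0$ make the left orthogonality condition blind to the $Tor$-part of the codewords, so the whole story is controlled by $Res(C)$, and the theorem reduces to matching two explicit direct-sum decompositions.
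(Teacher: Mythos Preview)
Your proof is correct. Both your argument and the paper's hinge on the same formula
\[
C^{\perp_L} \;=\; r\,Res(C)^\perp \,\oplus\, t\,Res(C)^\perp,
\]
and both then compare it with the decomposition $C = r\,Res(C)\oplus t\,Tor(C)$ from Theorem~\ref{thm7}. The difference is in how that formula is obtained. The paper imports two facts from \cite{Shi21}: that $C^{\perp_L}$ is always free (their Lemma~5) and that $Res(C^{\perp_L}) = Res(C)^\perp$ (their Lemma~8), then plugs these into Theorem~\ref{thm7}. You instead derive everything from the single inner-product identity
\[
\langle r\boldsymbol{u}+t\boldsymbol{v},\; r\boldsymbol{u}'+t\boldsymbol{v}'\rangle \;=\; \langle\boldsymbol{u},\boldsymbol{u}'\rangle_{\mathbb{F}_p}\,r \;+\; \langle\boldsymbol{v},\boldsymbol{u}'\rangle_{\mathbb{F}_p}\,t,
\]
which follows in one line from $rt=tt=0$. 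This makes your argument self-contained and more transparent: it exposes exactly why the left dual is blind to $Tor(C)$ and hence always free, rather than quoting that as a black box. The paper's version is shorter on the page because the work has been outsourced to \cite{Shi21}; yours is the argument one would want if the lemmas were not already on the shelf.
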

\begin{proof}
    Let $C$ be a left self-dual code over $E_p$. Then, $C$ is free, by Lemma $5$ from \cite{Shi21}. We have $Res(C)=Res(C^{\perp_L})$, as $C=C^{\perp_L}$. On the other hand, $Res(C^{\perp_L})=Res(C)^\perp$  by Lemma $8$ from \cite{Shi21}. This implies that $Res(C)=Res(C)^\perp$. Therefore, the residue code is self-dual.

    Conversely, suppose that the $E_p$-linear code $C$ is free and its residue code is self-dual. Then, by Theorem \ref{thm7},
    \begin{align*}
        C^{\perp_L} & = rRes(C^{\perp_L}) \oplus tTor(C^{\perp_L}) \\
        & =rRes(C^{\perp_L}) \oplus tRes(C^{\perp_L})  \\
        &= rRes(C)^\perp \oplus tRes(C)^\perp    \\
        &=  rRes(C) \oplus tRes(C)  \\
        &=  rRes(C) \oplus tTor(C) \\
        &= C.
        \end{align*}
        Thus, $C$ is left self-dual.
\end{proof}

The following result gives a necessary and sufficient condition under which a left self-dual $E_p$-code becomes MDS or AMDS.
\begin{theorem}\label{thm26}
        A left self-dual $E_p$-code $C$ is MDS (resp. AMDS)  if and only if its residue code is an MDS (resp. AMDS) code over $\mathbb{F}_p$.
\end{theorem}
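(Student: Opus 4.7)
The plan is to derive Theorem \ref{thm26} as an immediate consequence of Theorem \ref{thm13} combined with Theorem \ref{thm25}. Theorem \ref{thm13} already gives a complete characterization of MDS and AMDS property for \emph{free} $E_p$-linear codes in terms of their residue codes, so the only additional ingredient I need is that every left self-dual code over $E_p$ is automatically free, which is exactly the freeness conclusion in Theorem \ref{thm25}.

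First I would invoke the forward direction of Theorem \ref{thm25}: any left self-dual $E_p$-code $C$ is free. This places $C$ within the hypothesis of Theorem \ref{thm13}, which is crucial because without freeness the counterexamples following Theorem \ref{thm13} in the paper show that the MDS/AMDS equivalence can fail.

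Having established that $C$ is free, I would then apply Theorem \ref{thm13} directly: $C$ is MDS (resp.\ AMDS) over $E_p$ if and only if $Res(C)$ is MDS (resp.\ AMDS) over $\mathbb{F}_p$. This yields both directions of the biconditional in one stroke, since the size identity $|C|=|Res(C)|^2=(p^2)^{m_1}$ and the minimum-distance equality $d(C)=d(Res(C))$ from Theorem \ref{thm 5E} both hold once freeness is known.

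There is essentially no obstacle: the result is a corollary of the two previously established structural theorems. The only minor point worth being explicit about is that we only need the freeness half of Theorem \ref{thm25}; the self-duality of $Res(C)$ does not enter the MDS/AMDS count, but it will be used later in the classification tables to actually enumerate such codes by running over self-dual residue codes over $\mathbb{F}_p$.
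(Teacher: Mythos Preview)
Your proposal is correct and matches the paper's own proof essentially verbatim: the paper also invokes Theorem \ref{thm25} to conclude that a left self-dual $E_p$-code is free, and then applies Theorem \ref{thm13} to obtain the MDS/AMDS equivalence with the residue code. Your additional remarks about $d(C)=d(Res(C))$ and the irrelevance of the self-duality of $Res(C)$ for the MDS count are accurate side observations but not needed for the argument itself.
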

\begin{proof}
    By Theorem \ref{thm25}, a left self-dual $E_p$-code is always free. Therefore, by Theorem \ref{thm13}, a left self-dual $E_p$-code $C$ is MDS (resp. AMDS)  if and only if its residue code is an MDS (resp. AMDS) code over $\mathbb{F}_p$.
\end{proof}

The next result enumerates monomial inequivalent MDS and AMDS left self-dual codes over $E_p$.

\begin{theorem}
    The number of monomial inequivalent MDS (resp. AMDS) left self-dual codes of length $n$ over $E_p$ with minimum distance $d$ and the number of monomial inequivalent MDS (resp. AMDS) self-dual codes over $\mathbb{F}_p$ with the same length and minimum distance are equal.
\end{theorem}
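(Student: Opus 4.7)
The plan is to establish a bijection between monomial equivalence classes of MDS (resp. AMDS) left self-dual codes of length $n$ and minimum distance $d$ over $E_p$ and monomial equivalence classes of MDS (resp. AMDS) self-dual codes of length $n$ and minimum distance $d$ over $\mathbb{F}_p$, via the residue map $C \mapsto Res(C)$.

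First I would check that the residue map is well-defined at the level of equivalence classes with the required extra properties. If $C$ is a left self-dual MDS (resp. AMDS) code over $E_p$ with minimum distance $d$, then Theorem \ref{thm25} gives that $C$ is free and $Res(C)$ is self-dual over $\mathbb{F}_p$; Theorem \ref{thm26} says $Res(C)$ is MDS (resp. AMDS); and Theorem \ref{thm 5E} gives $d(Res(C)) = d(C) = d$. So $Res(C)$ is a self-dual MDS (resp. AMDS) $\mathbb{F}_p$-code of length $n$ and minimum distance $d$. Since $C$ is free, Theorem \ref{thm4E} tells us that the monomial equivalence class of $C$ over $E_p$ is determined by, and determines, the monomial equivalence class of $Res(C)$ over $\mathbb{F}_p$. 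This simultaneously gives well-definedness of the induced map on equivalence classes and its injectivity.

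For surjectivity, take any self-dual MDS (resp. AMDS) code $D$ over $\mathbb{F}_p$ of length $n$ and minimum distance $d$. I would construct an $E_p$-linear code $C$ with $Res(C) = D$ that is free, namely $C := rD \oplus tD$ (so that $Tor(C) = D = Res(C)$, forcing $m_2 = 0$ and hence $C$ free by the remark after the definition of a free code, while Theorem \ref{thm7} guarantees this direct sum decomposition is consistent). By Theorem \ref{thm25}, freeness of $C$ together with self-duality of $Res(C) = D$ implies $C$ is left self-dual; by Theorem \ref{thm26}, $C$ is MDS (resp. AMDS); and by Theorem \ref{thm 5E}, $d(C) = d(D) = d$. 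Hence $D$ lies in the image of the residue map, and by Theorem \ref{thm4E} its monomial equivalence class over $\mathbb{F}_p$ is hit by the equivalence class of $C$ over $E_p$.

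The only subtle step is the surjectivity construction, where one must verify that $C = rD \oplus tD$ really is an $E_p$-linear code with residue code exactly $D$ (rather than something larger); this follows from Theorem \ref{thm7} applied backwards, together with the characterization of freeness via $Res(C)=Tor(C)$. Everything else is assembled directly from Theorems \ref{thm4E}, \ref{thm 5E}, \ref{thm25}, and \ref{thm26}, so the proof reduces to citing these results in the correct order.
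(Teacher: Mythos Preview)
Your proposal is correct and follows essentially the same approach as the paper, which simply cites Theorems \ref{thm25}, \ref{thm 5E}, and \ref{thm26} and declares the result immediate. Your version is more careful: you explicitly invoke Theorem \ref{thm4E} for the monomial-equivalence bijection and spell out the surjectivity via $C = rD \oplus tD$, both of which the paper leaves implicit but are indeed needed for a complete argument.
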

\begin{proof}
    By Theorem \ref{thm25}, a left self-dual code $C$ over $E_p$ is always free and its residue code is a self-dual code. Then the result holds directly from Theorems \ref{thm 5E} and \ref{thm26}.
\end{proof}

 Our next step is to classify MDS and AMDS left self-dual codes over $E_2$ and $E_3$ for some specific lengths. Theorem \ref{thm26} can be used to classify MDS and AMDS left self-dual codes over $E_p$. We refer \cite{database,Mallows,Pless} for inequivalent binary and ternary self-dual codes and use MAGMA computer algebra system \cite{Magma} to find MDS and AMDS binary and ternary self-dual codes. Then, we use  Theorems \ref{thm2} and \ref{thm26} to construct monomial inequivalent MDS and AMDS left self-dual codes over $E_2$ and $E_3$ for lengths up to $12$ and list them in Tables \ref{Tab7} and \ref{Tab8}, respectively. The length (up to $12$) is absent in the tables if there exists no MDS or AMDS left self-dual code of that length.   \\

\begin{longtable}{|c|c|c|c|}
\caption{\label{Tab7} Monomial inequivalent MDS and AMDS left self-dual   codes over $E_2$.} \\
\hline
$n$ &~~~~~~~~ Generator Matrix & Minimum  & Remark \\ && Distance &  \\

 \hline

 $2$ & $\begin{pmatrix}
     r& r
 \end{pmatrix}$  & 2 & MDS\\ [5pt]

 \hline

  $4$ & $\begin{pmatrix}
     r& r & 0 & 0 \\
     0 & 0 & r & r
 \end{pmatrix}$  & 2 & AMDS\\ [10pt]
 \hline

  $8$ & $\begin{pmatrix}
     r& 0 & 0 & 0 & 0 & r & r & r \\
     0 & r & 0 & 0 & r & 0 & r & r\\
     0 & 0 & 0 & r & r & r & 0 & r \\
     0 & 0 & 0 & r & r & r & r & r
 \end{pmatrix}$  & 4 & AMDS\\ [10pt]

\hline

\end{longtable}

\begin{longtable}{|c|c|c|c|}
\caption{\label{Tab8} Monomial inequivalent MDS and AMDS left  self-dual   codes over $E_3$.}\\
\hline
$n$ &~~~~~~~~ Generator Matrix & Minimum  & Remark \\ && Distance &  \\

 \hline

  $4$ & $\begin{pmatrix}
     r & 0 & r & r \\
     0 & r & r & 2r
 \end{pmatrix}$  & 3 & MDS\\ [10pt]
 \hline

 \setcounter{MaxMatrixCols}{12}
  $12$ & $\begin{pmatrix}

     r & 0 & 0 & 0 & 0 & 0 & 0  & r  & r  & r & r  & r \\
     0 & r & 0 & 0 & 0 & 0 & 2r & 0  & r  & 2r & 2r & r \\
     0 & 0 & r & 0 & 0 & 0 & 2r & r  & 0  & r  & 2r & 2r  \\
     0 & 0 & 0 & r & 0 & 0 & 2r & 2r & r  & 0  & r  & 2r \\
     0 & 0 & 0 & 0 & r & 0 & 2r & 2r & 2r & r  & 0  & r \\
     0 & 0 & 0 & 0 & 0 & r & 2r & r  & 2r & 2r & r  & 0
 \end{pmatrix}$  & 6 & AMDS \\ [20pt]

\hline

\end{longtable}

In the next result, we investigate MDS and AMDS right self-dual codes over $E_p$.

\begin{theorem}
    No MDS right self-dual code exists over the ring $ E_p$. Moreover, a right self-dual code $C$ over $E_p$ is AMDS if and only if its length is $2$.
\end{theorem}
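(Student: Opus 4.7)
The plan is to prove the stronger structural statement that, for every $n \geq 1$, the only right self-dual code of length $n$ over $E_p$ is $I^n = t\mathbb{F}_p^n$; the MDS and AMDS claims then follow from a one-line cardinality count. The crucial algebraic identity is $et = 0$ for every $e \in E_p$, coming from $er = es = e$: namely $et = e(r + (p-1)s) = e + (p-1)e = pe = 0$. A secondary fact I will need is $Res(C) \subseteq Tor(C)$ for any $E_p$-linear code $C$, which follows by multiplying a codeword $c = ra + te \in C$ by $t$ and using $tr = t$ and $tt = 0$ to deduce $ta \in C$.

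For the structural step, let $C$ be an arbitrary $E_p$-linear code and write each $z_j \in E_p$ in $t$-adic form as $z_j = u_j r + v_j t$. Using $w_j r = w_j$ and $w_j t = 0$, the inner product collapses to $\langle w, z\rangle = \sum_j u_j w_j$, so the $t$-coordinates $v_j$ of $z$ are entirely free. Expanding $w_j = a_j r + b_j t$ and invoking Theorem \ref{thm7} to let $(a_j) \in Res(C)$ and $(b_j) \in Tor(C)$ vary independently, the vanishing of $\langle w,z\rangle$ for all $w \in C$ becomes $u \cdot a = 0$ for every $a \in Res(C)$ together with $u \cdot b = 0$ for every $b \in Tor(C)$. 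The inclusion $Res(C) \subseteq Tor(C)$ collapses these to the single condition $u \in Tor(C)^\perp$, yielding
$$C^{\perp_R} = r\, Tor(C)^\perp \,\oplus\, t\, \mathbb{F}_p^n,$$
so that $Res(C^{\perp_R}) = Tor(C)^\perp$ and $Tor(C^{\perp_R}) = \mathbb{F}_p^n$. If $C = C^{\perp_R}$, matching torsion codes forces $Tor(C) = \mathbb{F}_p^n$, and then matching residues gives $Res(C) = (\mathbb{F}_p^n)^\perp = \{0\}$; by Theorem \ref{thm7}, $C = I^n$. Conversely, substituting $C = I^n$ into the displayed formula returns $I^n$, confirming right self-duality.

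For the cardinality step, $|I^n| = p^n$ and $d(I^n) = 1$, witnessed by the weight-one codeword $(t, 0, \ldots, 0)$. The MDS equality $|C| = |E_p|^{n-d+1}$ would force $p^n = p^{2n}$, impossible for $n \geq 1$, so no MDS right self-dual code exists. The AMDS equality $|C| = |E_p|^{n-d}$ becomes $p^n = p^{2(n-1)}$, which holds exactly when $n = 2$; existence at $n = 2$ is realized by $I^2$ itself. The main obstacle is the structural computation of $C^{\perp_R}$: once the formula $C^{\perp_R} = r\,Tor(C)^\perp \oplus t\,\mathbb{F}_p^n$ is in hand, the remainder of the argument is routine.
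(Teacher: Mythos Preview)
Your argument is correct. Both your proof and the paper's reduce to the structural fact that the unique right self-dual code of length $n$ is $t\mathbb{F}_p^n$, followed by an identical cardinality check against the Singleton bound. The paper obtains the structural fact by citing Theorem~2(iv) of \cite{Alah24}, whereas you derive it from scratch via the explicit formula $C^{\perp_R} = r\,Tor(C)^\perp \oplus t\,\mathbb{F}_p^n$, valid for an arbitrary $E_p$-linear code $C$. Your route is thus more self-contained and produces a slightly stronger intermediate statement (the closed form for the right dual of any code); the paper's route is shorter because it outsources that step to the literature.
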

\begin{proof}
  By (iv) of Theorem 2 in \cite{Alah24}, an $E_p$-code $C$ of length $n$ is right self-dual if and only if $C=t\mathbb{F}_p^n$. Then,
  $$d=d(C)=d(t\mathbb{F}_p^n)=d(\mathbb{F}_p^n)=1.$$
  Also, $$|C|=|t\mathbb{F}_p^n|=|\mathbb{F}_p^n|=p^n.$$
  Hence, $|E_p|^{n-d+1}=({p^2})^{n-1+1}=p^{2n}$ and $|E_p|^{n-d}=({p^2})^{n-1}=p^{2n-2}$. This shows that $C$ can never be an MDS right self-dual code over $E_p$, as $|C|\neq |E_p|^{n-d+1}$. On the other hand, suppose that the right self-dual $E_p$-code $C$ is AMDS. Then, $p^n=p^{2n-2}$ implies that $n=2n-2$ and so $n=2$. Conversely, if the right self-dual code $C$ is of length $2$, it is AMDS. Thus, the code $C$ is AMDS if and only if its length is $2$.
\end{proof}

Next, we study self-dual  MDS and AMDS codes over $E_p$. In Remark $2$ of \cite{Alah23}, authors have proved that the concepts of QSD codes and self-dual codes over $E_2$ coincide. This result is also true for codes over $E_p$. The proof follows the same approach as in \cite{Alah23}. We use this result in the next theorem to show that the odd lengths MDS and AMDS self-dual codes do not exist over $E_p$.
 \begin{theorem}\label{thm21a}
 Any MDS or AMDS self-dual code over  $E_p$  must have an even length. Moreover, $m_2$ is also even.
 \end{theorem}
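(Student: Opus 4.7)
My plan is to use the decomposition of $C$ given in Theorem \ref{thm7} together with the QSD identity $|C|=p^n$ (which holds for self-dual codes over $E_p$ by the extension of Remark 2 of \cite{Alah23} that the authors invoke), and then impose the MDS/AMDS size constraints.

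First, I would record that for any $E_p$-linear code $C$ of length $n$, Theorem \ref{thm7} gives $C=rRes(C)\oplus tTor(C)$, so
\[
|C|=|Res(C)|\cdot|Tor(C)|=p^{m_1}\cdot p^{m_1+m_2}=p^{2m_1+m_2}.
\]
Since $C$ is self-dual, it is QSD, so $|C|=p^n$, hence
\[
2m_1+m_2=n. \qquad (\ast)
\]

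Next, I would apply the two size conditions. If $C$ is MDS with minimum distance $d$, then $|C|=|E_p|^{n-d+1}=p^{2(n-d+1)}$, so $p^n=p^{2(n-d+1)}$ forces $n=2(n-d+1)$, i.e.\ $n=2(d-1)$, which is even. If $C$ is AMDS, then $|C|=|E_p|^{n-d}=p^{2(n-d)}$, so $p^n=p^{2(n-d)}$ forces $n=2(n-d)$, i.e.\ $n=2d$, again even. In both cases $n$ is even.

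Finally, $m_2$ even follows immediately from $(\ast)$: since $n$ is even and $2m_1$ is even, $m_2=n-2m_1$ is even. The only non-routine step is justifying $|C|=p^n$ for self-dual codes over $E_p$, which the paper has already flagged as a direct extension of the $E_2$ argument from \cite{Alah23}; I would simply cite that remark. Everything else is cardinality bookkeeping, so I do not anticipate a serious obstacle.
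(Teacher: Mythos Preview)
Your proposal is correct and follows essentially the same argument as the paper: both use that self-dual implies QSD so $|C|=p^n$, equate this with the MDS/AMDS size $|E_p|^{n-d+1}$ or $|E_p|^{n-d}$ to get $n=2(d-1)$ or $n=2d$, and then read off $m_2=n-2m_1$ even from $|C|=p^{2m_1+m_2}$. The only cosmetic difference is that you derive $|C|=p^{2m_1+m_2}$ explicitly from Theorem~\ref{thm7}, whereas the paper simply states it.
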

 \begin{proof}
     Let $C$ be a self-dual code of length $n$ over $E_p$ with minimum distance $d$. Since all the self-dual codes over $E_p$ are also QSD, we have $|C|=p^n$. Now, if $C$ is MDS, then $|C|=p^n=|E_p|^{n-d+1}$ implies that $n=2n-2d+2$. Hence, $n=2(d-1)$. On the other hand, if $C$ is AMDS, then  $|C|=p^n=|E_p|^{n-d}$ implies that $n=2n-2d$. Therefore, $n=2d$. Thus, from both scenarios, we conclude that any MDS or AMDS self-dual code over $E_p$  must have an even length. Further, $|C|=p^{2m_1+m_2}=p^n$ implies that $2m_1+m_2=n$, and so $m_2=n-2m_1$. Therefore, $m_2$ is also even.
 \end{proof}

The next result shows that there is no MDS self-dual code over $E_p$ with minimum distance $1$.
 \begin{theorem}
     There exist no MDS self-dual code over $E_p$ with $d(C)=1$.
 \end{theorem}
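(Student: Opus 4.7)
The plan is to derive a contradiction by specializing the cardinality identity already recorded inside Theorem~\ref{thm21a}. Suppose, for the sake of contradiction, that $C$ is an MDS self-dual code over $E_p$ of length $n$ with $d(C)=1$. Since every self-dual code over $E_p$ is automatically QSD (the extension of Remark~2 of \cite{Alah23} alluded to just before Theorem~\ref{thm21a}), we have $|C|=p^{n}$. Combining this with the MDS equality $|C|=|E_p|^{n-d+1}=p^{2(n-d+1)}$ gives $p^{n}=p^{2(n-d+1)}$, which forces $n=2(d-1)$, exactly the identity extracted in the proof of Theorem~\ref{thm21a}.

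Substituting $d=1$ into $n=2(d-1)$ yields $n=0$. A code of length $0$ is the trivial module $\{()\}$, which contains no nonzero codeword and therefore cannot have minimum distance $1$. This contradicts the assumption $d(C)=1$, so no MDS self-dual code over $E_p$ with $d(C)=1$ can exist.

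There is essentially no obstacle here: the result is a one-line specialization of the formula $n=2(d-1)$ proved in Theorem~\ref{thm21a}. If one prefers a direct argument without invoking Theorem~\ref{thm21a}, the alternative is to observe that a weight-one codeword $\mathbf{c}=(0,\ldots,0,e,0,\ldots,0)$ of a self-dual code must satisfy $\langle \mathbf{c},\mathbf{c}\rangle=e^{2}=0$; a short computation with $e=ur+vt$, using $r^{2}=r$, $t^{2}=0$, $rt=0$, $tr=t$, shows $e^{2}=u^{2}r+uv\,t$, so $e^{2}=0$ forces $u=0$, i.e.\ $e\in tt\mathbb{F}_p$, and then counting the contribution of such coordinates to $|C|=p^{n}$ would again conflict with the MDS cardinality $p^{2n}$. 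The cleaner route is the first one.
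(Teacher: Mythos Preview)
Your main argument is correct and essentially identical to the paper's: both invoke Theorem~\ref{thm21a} to obtain $n=2(d-1)$ and then observe this is incompatible with $d=1$ (the paper phrases it as ``$d$ must be greater than $1$'', you phrase it as ``$d=1$ forces $n=0$''). The alternative route you sketch is extra and not in the paper; note the typo $tt\mathbb{F}_p$ there should be $t\mathbb{F}_p$, and the final counting step would need more care, but since your primary argument already matches the paper this is harmless.
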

 \begin{proof}
   Let $C$ be an MDS self-dual code of length $n$ over $E_p$ with minimum distance $d$. Then, by Theorem \ref{thm21a}, $n=2(d-1)$. This shows that $d$ should always be greater than $1$, and hence the result follows.
 \end{proof}

Our next purpose is to classify self-dual MDS and AMDS codes over $E_2$ and $E_3$. QSD codes over $E_2$ are classified in \cite{Alah22} for lengths up to $6$. We use this classification and MAGMA computer algebra system \cite{Magma} to classify self-dual MDS and AMDS  codes over $E_2$ for lengths up to $6$ and list them in Table \ref{Tab9}. Moreover, authors in \cite{Alah22} have shown that for a QSD code $C$ over $E_2$, $Tor(C)=Res(C)^\perp$ and  $Res(C)$ is a self-orthogonal code. This result is also true for codes over $E_p$. We use this result and Theorem \ref{thm7} to construct monomial inequivalent self-dual codes over $E_3$ and use MAGMA computer algebra system \cite{Magma} to classify MDS and AMDS self-dual codes over $E_3$ for lengths up to $4$ and list them in Table \ref{Tab10}. The lengths (up to $6$  and $4$, resp.) are absent in the tables if no MDS or AMDS self-dual code exists for that length.

\begin{longtable}{|c|p{6cm}|c|c|}
\caption{\label{Tab9} Monomial inequivalent MDS and AMDS self-dual codes over $E_2$.}\\
\hline
 $n$ &~~~~~~~~ Generator Matrix & Minimum  & Remark \\ && Distance &  \\

 \hline
 & \hspace{1.8cm}$\begin{pmatrix}
     r & ~ r
 \end{pmatrix}$ & $2$ & MDS \\
 $2$ & \hspace{2cm}$tI_2$ & $1$ & AMDS \\ [5pt]
 \hline

 & \hspace{1.5cm}$\begin{pmatrix}
     r & 0 & r & 0 \\
     0 & r & 0 & r

 \end{pmatrix}$  & 2 & AMDS\\ [10pt]
  $4$ & \hspace{1.5cm}$\begin{pmatrix}
     t & 0 & 0 & t \\
     r & r & r & r \\
     0 & t & 0 & t \\
     0 & 0 & t & t
 \end{pmatrix}$  & 2 & AMDS\\ [20pt]

\hline

\end{longtable}

\begin{longtable}{|c|p{6cm}|c|c|}
\caption{\label{Tab10} Monomial inequivalent MDS and AMDS self-dual codes over $E_3$.}\\
\hline
 $n$ &~~~~~~~~ Generator Matrix & Minimum  & Remark \\ && Distance &  \\

 \hline
 $2$ & \hspace{2cm}$tI_2$ & $1$ & AMDS \\
 \hline

$4$ & \hspace{1.5cm}$\begin{pmatrix}
     r & 0 & 2r & 2r \\
     0 & r & 2r & r

 \end{pmatrix}$  & 3 & MDS\\ [10pt]

\hline

\end{longtable}

\section{Conclusion}
This paper presented the study of LCD and self-dual codes over a non-unital noncommutative ring $E_p$. Firstly, the enumeration of monomial inequivalent LCD codes over $E_2$ and $E_3$ with a fixed code length and minimum distance is carried out for lengths up to $13$ and $10$, respectively. Then, the classification of MDS and AMDS  LCD codes over $E_2$ and $E_3$ is presented for lengths up to $6$. Further, we have introduced MDS and AMDS left self-dual codes over $E_p$ and classified these codes over $E_2$ and $E_3$ for lengths up to $12$. Also, we have shown that MDS right self-dual codes do not exist over $E_p$. Finally, self-dual codes have been introduced over the ring $E_p$, and then MDS and AMDS self-dual codes over $E_2$ and $E_3$ are classified for smaller lengths.
\section*{Acknowledgement}
The first author thanks the Council of Scientific \& Industrial Research (under grant No. 09/1023(16098)/2022-EMR-I), Govt. of India, for financial support.
\section*{Declarations}
\textbf{Use of AI tools}:
The authors confirm that no Artificial Intelligence (AI) tools were used for the preparation of this manuscript.\par
\textbf{Competing interests}: The authors declare that they have no conflict of interest related to the publication of this manuscript.\par

\textbf{Statement on Data Availability}: The authors confirm that this manuscript encompasses all the data used to support the findings of this study. For any essential clarifications, requests can be directed to the corresponding author.

\end{document}